%% file: main.tex
\documentclass[11pt]{article}
\usepackage[margin=1in]{geometry}
\usepackage{amsmath,amsthm,amsfonts,amssymb}
\usepackage[colorlinks=true,linkcolor=magenta!70!black,citecolor=blue!60!black,urlcolor=blue!60!black]{hyperref}
\usepackage{enumitem}
\usepackage[T1]{fontenc}
\usepackage{aliascnt}
\usepackage{graphicx}
\usepackage{tikz}
\usetikzlibrary{arrows.meta,calc,positioning}
\usepackage{float}
\usepackage{xcolor}
\usetikzlibrary{arrows.meta}

\definecolor{slotred}{RGB}{205,55,55}
\definecolor{slotblue}{RGB}{45,105,190}
\definecolor{slotgreen}{RGB}{45,145,75}
\definecolor{slotyellow}{RGB}{218,167,0}

\usepackage{booktabs,tabularx,array}
\usepackage{caption}
\usepackage{bbm}

\allowdisplaybreaks

\usepackage[capitalise,nameinlink]{cleveref}
\Crefname{theorem}{Theorem}{Theorems}
\Crefname{subsection}{Subsection}{Subsections}
\Crefname{section}{Section}{Sections}
\Crefformat{equation}{(#2#1#3)}
\crefname{subsection}{Subsec.}{Subsecs.}
\Crefname{subsection}{Subsec.}{Subsecs.}

\newtheorem{theorem}{Theorem}[section]

\newcommand{\newaliastheorem}[2]{%
  \newaliascnt{#1}{theorem}%
  \newtheorem{#1}[#1]{#2}%
  \aliascntresetthe{#1}%
  \crefname{#1}{#2}{#2s}%
  \Crefname{#1}{#2}{#2s}%
}

\newaliastheorem{proposition}{Proposition}
\newaliastheorem{lemma}{Lemma}
\newaliastheorem{conjecture}{Conjecture}

\theoremstyle{definition}
\newaliastheorem{definition}{Definition}
\newaliastheorem{remark}{Remark}
\newaliastheorem{corollary}{Corollary}

\usepackage{thmtools}
\usepackage{thm-restate}

\newcommand{\floorbra}[1]{\left\lfloor #1 \right\rfloor}

\newcommand{\classp}{\mathsf{P}}
\newcommand{\classnp}{\mathsf{NP}}
\newcommand{\dtime}{\textsc{DTIME}}

\title{Inapproximability of Unique-Machine Precedence Scheduling for Unit-Length Jobs}

\author{
Venkatesan Guruswami
\thanks{Simons Institute for the Theory of Computing, and Departments of EECS \& Mathematics, UC Berkeley. Email: {\tt venkatg@berkeley.edu}. Research supported in part by NSF grant CCF-2211972 and a Simons Investigator award.}
\and 
Xuandi Ren\thanks{Department of EECS, UC Berkeley. Email: \texttt{xuandi\_ren@berkeley.edu}. Supported in part by NSF grant CCF-2228287, a DARPA grant under Contract No. HR0011262E031, and V.G's Simons Investigator award.}
\and 
Shaoxuan Tang\thanks{Institute for Interdisciplinary Information Sciences, Tsinghua University. This work was done when visiting UC Berkeley. Email: \texttt{tsx23@mails.tsinghua.edu.cn}.}
}

\date{}
\begin{document}

\maketitle
\thispagestyle{empty}

\begin{abstract}
The Unique-Machine Precedence Scheduling (UMPS) problem, introduced by (Davies, Kulkarni, Rothvoss, Sandeep, Tarnawski, and Zhang, 2022), seeks a makespan-minimizing schedule of precedence-constrained jobs when each job has a unique eligible machine. On the one hand, UMPS generalizes job shop scheduling by allowing the precedence graph to be an arbitrary DAG rather than a disjoint union of chains. On the other hand, UMPS admits approximation-preserving reductions to scheduling problems with communication delays, including the job-job delay model (DKRSTZ, 2022) and the job-machine delay model (Rajaraman, Stalfa, and Yang, 2023). Despite its central role, the approximability of UMPS has remained poorly understood: even for unit-length jobs, known scheduling techniques do not seem to yield a non-trivial approximation, and the existence of a polylogarithmic approximation was left open by (DKRSTZ, 2022). 
On the hardness side, the previous best lower bound for unit-length jobs was only the 5/4 inherited from job shop scheduling (Williamson, Hall, Hoogeveen, Hurkens, Lenstra, Sevast'janov, and Shmoys, 1997). 

\smallskip
We prove that unit-length UMPS is NP-hard to approximate within any constant factor. We further show that, assuming NP is not in quasi-polynomial time, unit-length UMPS admits no polynomial-time $(\text{log} n)^\gamma$-approximation for some constant $\gamma>0$. Via the known reductions from UMPS, these lower bounds also transfer to the corresponding unit-length communication-delay scheduling models.

\smallskip
Our proof proceeds via a reduction from a hypergraph coloring promise problem. In the yes case, the input $h$-uniform hypergraph admits a $k$-coloring in which no color appears more than $a$ times in any hyperedge; in the no case, the hypergraph has no large independent set. We convert such a promise into a UMPS gap of order $h/a$.  
Instantiating this framework with a balanced hypergraph coloring hardness of (Guruswami and Lee, 2018) gives arbitrary constant-factor inapproximability, while combining the $4$-colorable $4$-uniform hypergraph coloring hardness of (Guruswami, Harsha, H{\aa}stad, Srinivasan, and Varma, 2017) with a certain composition operation for  hypergraphs yields the polylogarithmic factor inapproximability.
\end{abstract}

\newpage
\tableofcontents
\thispagestyle{empty}

\clearpage
\input{sec/intro}
\input{sec/pre}
\input{sec/reduction}
\input{sec/coloring}

\section*{Acknowledgements}
The authors used ChatGPT 5.5 Pro to assist in making the figures in this paper.

\nocite{*}
\bibliographystyle{alpha}
\bibliography{ref}

\clearpage

\appendix
\input{sec/pcp-free}

\end{document}

%% file: sec/intro.tex
\section{Introduction}
Precedence constraints are a classic source of computational difficulty in scheduling problems~\cite{LenstraRinnooyKan1978Precedence,GrahamEtAl1979Survey}. Jobs form a directed acyclic graph: a job can start only after all its predecessors have completed, and the makespan objective asks to complete the whole DAG as quickly as possible.  This basic model has two intertwined sources of difficulty: the partial order among jobs and the constraints governing machine use. 
Several classical variants enrich the latter: 
machine-eligibility constraints restrict each job to a prescribed subset of machines~\cite{LeungLi2008ProcessingSet,LeeLeungPinedo2011MachineEligibility}; %
unrelated-machine models let the processing time $p_{ij}$ depend on both the job and the machine~\cite{LenstraShmoysTardos1990Unrelated}; and communication-delay models impose extra separation when dependent jobs are executed on different machines~\cite{PapadimitriouYannakakis1990ArchitectureIndependent}.

\emph{Unique-Machine Precedence Scheduling} (UMPS) takes a complementary minimalist view of the job-machine interaction: every job is assigned to a specific machine in advance, so the scheduler can only choose the order of the jobs on each machine, subject to arbitrary  precedence constraints between jobs on possibly different machines. Thus UMPS asks whether precedence scheduling is already hard before one uses assignment choices, machine-dependent processing times, or communication delays.

UMPS lies between several better-known scheduling problems. On the upstream side, job shop scheduling is a special case of UMPS, in which the precedence graph is a disjoint union of chains. Therefore, UMPS inherits the hardness for job shop: quasi-polynomial-time logarithmic inapproximability with arbitrary job lengths \cite{MastrolilliSvensson2011JobShop}, and $5/4$ NP-hardness when jobs have unit lengths \cite{WHH+97}. On the downstream side, in their paper introducing UMPS, Davies, Kulkarni, Rothvoss, Sandeep, Tarnawski, and Zhang~\cite{DaviesEtAl2022NonUniform} reduce it to the non-uniform job-job communication-delay problem, where jobs are assigned to identical machines, and each precedence edge $j\prec k$ has a delay $c_{j,k}$ that must elapse between the completion of $j$ and the start of $k$ if the two jobs are scheduled on different machines. The resulting instance has only unit-length jobs if UMPS is of unit-length. Rajaraman, Stalfa, and Yang~\cite{RajaramanStalfaYang2023JobMachineDelays}
later considered a job-machine delay model, in which the cross-machine delay is a function
\(\rho_{k,i}\) of the successor job \(k\) and the machine \(i\):
if \(k\) starts on machine \(i\), then every predecessor \(j\prec k\) scheduled on a different
machine must complete at least \(\rho_{k,i}\) time before \(k\) starts.  They gave an approximation-preserving reduction from UMPS to this job-machine delay model. 

The algorithmic picture of UMPS remains poorly understood. When the fixed-machine restriction is absent, there are various approximation algorithms for scheduling with precedence, e.g., Graham's list-scheduling 2-approximation on identical machines \cite{Graham1966}, and $O(\log m/\log \log m)$ approximations for related machines \cite{ChudakShmoys1999, ChekuriBender2001, Li2017}. %
For precedence scheduling with communication delays, positive results are known mainly under strong structural restrictions: uniform delays admit polylogarithmic approximations via Sherali--Adams clustering and duplication-based rounding~\cite{DaviesEtAl2020LPHierarchies,MaitiEtAl2020RelatedMachinesCommunication}, while certain additive non-uniform delay models admit polylogarithmic guarantees~\cite{RajaramanStalfaYang2023JobMachineDelays}. These algorithms, however, rely on the flexibility of assigning jobs to suitable machines, which UMPS does not offer.  Therefore, they do not yield comparable guarantees for UMPS itself.  Indeed, the work introducing UMPS explicitly points out that classical job-shop algorithms and recent LP-hierarchy methods fail to provide non-trivial approximation guarantees for UMPS~\cite{DaviesEtAl2022NonUniform}. This led them to formulate the following strong hardness conjecture, even for the unit-length case:

\begin{conjecture}[\cite{DaviesEtAl2022NonUniform}, Conjecture~2]
\label{conj:dkrstz-polynomial}
There exists an absolute constant $\varepsilon > 0$ such that it is $\classnp$-hard to approximate UMPS with $n$ jobs within a factor of $n^\varepsilon$, even when all jobs have unit length.
\end{conjecture}

In this work, we prove that unit-length UMPS is NP-hard to approximate within any constant factor. Moreover, assuming $\classnp\nsubseteq \dtime(n^{\mathrm{polylog}(n)})$, it does not admit a polynomial-time $(\log N)^{\gamma}$-approximation for some constant $\gamma>0$, where $N$ denotes the instance size. These results substantially strengthen the previous $5/4$ inapproximability for UMPS inherited from job shop scheduling, and mark a step toward proving \Cref{conj:dkrstz-polynomial}.

\begin{restatable}{theorem}{thmnp}
\label{thm:np}
    For any constant $\gamma>1$, $\gamma$-approximating unit-length UMPS is NP-hard.
\end{restatable}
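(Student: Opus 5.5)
We prove \Cref{thm:np} by a gap-preserving reduction from a hypergraph coloring promise problem, as outlined in the abstract. The source is the balanced hypergraph coloring hardness of Guruswami and Lee (2018), which we use as a black box. For any constants, it is NP-hard to distinguish two cases for an $h$-uniform hypergraph $H=(V,E)$. In the yes case, $H$ has a $k$-coloring in which every color appears at most $a$ times in each hyperedge, with $h/a$ as large as desired. In the no case, $H$ has no independent set of relative size $\delta$. Given $\gamma$, we fix $h/a \gg \gamma$ and $\delta$ small enough. It then suffices to map $H$ to a unit-length UMPS instance whose optimal makespan is at most $T_{\mathrm{yes}}$ in the yes case and at least $\Omega(h/a)\cdot T_{\mathrm{yes}}$ in the no case.

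The construction has one machine per hyperedge $e\in E$ and $h$ unit jobs on it, one for each vertex $v\in e$; call this job $(e,v)$. Vertex consistency is enforced by precedence gadgets. We split time into $k$ phases, intended to correspond to the colors. For each vertex $v$ we introduce a chain of ``clock'' jobs on a private machine, with edges into and out of every job $(e,v)$ for $e\ni v$. These force all copies of $v$ to be processed in roughly the same phase window.

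\textbf{Yes case.} Given a balanced coloring $c$, we process every job $(e,v)$ in phase $c(v)$. Each machine $e$ then receives at most $a$ jobs per phase. Taking phases of length $a$ (plus gadget slack) gives makespan about $ka$.

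\textbf{No case.} Take any schedule of makespan $T$. We partition time into roughly $T/a'$ windows, with $a'$ chosen so that $ka\cdot\Omega(h/a)$ exceeds $T$ unless a contradiction arises. By averaging, some window contains a $\delta$-fraction of vertices whose copies are all scheduled there, using the synchronization forced by the gadget. Since there is no independent set of that size, some hyperedge $e$ has all $h$ of its vertices in the window. Machine $e$ therefore needs $h$ time units inside a window of length about $T\delta$. Comparing this against the yes value $ka$ yields the gap $\Omega(h/a)$, once the parameters $k$, $\delta$ and the window length are balanced.

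The main obstacle is the synchronization gadget. A naive gadget, such as a shared predecessor or successor for all copies of $v$, does not force the copies of $v$ into one short window; it only orders them relative to a single point. It also must not add makespan in the yes case. My first attempt is a layered construction. There are $k$ ``level'' jobs per vertex, and $(e,v)$ is placed between levels with edges such that delaying $v$ on any hyperedge machine pushes all its later levels. Then, in the no case, the level-$j$ start times give a monotone ``phase'' $\phi(v)$ within which all copies of $v$ must lie, up to additive $O(a)$.

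If this exact synchronization fails, the fallback is to accept that the copies of $v$ may spread out. We would charge the spread against the makespan and use averaging to find many vertices whose copies lie within a short span. This suffices because the no case only needs an independent-set violation, not a full coloring. Getting the parameters to close for arbitrary constant $\gamma$ is where most care is needed.
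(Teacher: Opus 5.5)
There is a genuine gap, and it sits where you suspect: the synchronization gadget is never supplied, and the construction you describe has no mechanism that could supply it. In UMPS there are no release dates or deadlines, so the only source of rigidity is machine capacity combined with the makespan bound.

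With a single copy of $H$, one machine per hyperedge, and a private clock chain per vertex, the reduction does not know $c(v)$. So the precedences around each job $(e,v)$ must admit every phase. An adversary can then run the clock job preceding $(e,v)$ at the very start and the one following it at the very end, and process each hyperedge machine's $h$ jobs back to back in between. The makespan is then governed by $h$ and the clock length alone, exactly as in the yes case, regardless of colorability. A single chain on a private machine has no choice other than idling, and idling is free under a loose budget. Hence nothing pins a phase $\phi(v)$, and spreading $v$'s copies costs nothing, so your fallback has nothing to charge.

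As a sanity check, your no-case estimate ($h$ jobs inside a window of length about $T\delta$, hence $T\gtrsim h/\delta$) would give gaps growing without bound as $\delta\to 0$ even for fixed $h,a,k$. All the difficulty is hidden in the unproven synchronization claim.

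The missing idea is to turn the color choice into an \emph{order of chains on a fully packed machine}. The paper takes $k$ virtual copies of $H$ with the cyclic shifts $\chi_i(v)=(\chi(v)+i)\bmod k+1$, so the $k$ copies of each vertex receive all $k$ colors. It puts one length-$a$ chain per copy on each vertex machine $M_{v,t}$, $t\in\{0,\dots,c\}$; in the yes case these chains exactly tile an interval of length $ka$. In any schedule, capacity then forces some chain $i(v)$ on layer $0$ to end at time at least $ka$.

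The shared predecessor and successor you dismiss as naive are then precisely the right gadget. For each copy $i$ and layer $t\in[c]$ there is a machine $M_{e,i,t}$ with $T_{v,i,t-1}\prec Y^v_{e,i,t}\prec S_{v,i,t}$, and the lateness of chain $i(v)$ propagates through the layers. Suppose the makespan is below $ka+c(a+h)$. Then the first layer $t(v)\in[c]$ whose tail ends before $B_{t(v)+1}$ has its previous tail ending at or after $B_{t(v)}=ka+(t(v)-1)(a+h)$, while its head starts before $B_{t(v)}+h$. This confines the start times of all $Y^v_{e,i(v),t(v)}$, $e\ni v$, to the fixed absolute window $[B_{t(v)},B_{t(v)}+h-1)$.

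One then pigeonholes over the $kc$ pairs $(i(v),t(v))$, not over time windows. With $\varepsilon\le 1/(kc)$, some class contains a hyperedge, whose $h$ jobs on $M_{e,i,t}$ cannot fit. The yes schedule has makespan $(k+2c)a$, so the gap tends to $(a+h)/(2a)$ as $c\to\infty$. The Guruswami--Lee parameters $h=Qk$, $a=Q+k-1$ push this to $(k+1)/2$, which exceeds any constant $\gamma$.
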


\begin{restatable}{theorem}{thmquasi}
\label{thm:quasi}
    Assume $\classnp\nsubseteq \dtime(n^{\mathrm{polylog}(n)})$. For some constant $\gamma>0$, unit-length UMPS has no polynomial-time $(\log N)^\gamma$-approximation, where $N$ is the instance size.
\end{restatable}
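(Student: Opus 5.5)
The plan is a reduction from the promise hypergraph coloring problem described in the abstract, followed by the composition operation from the abstract to push the gap $h/a$ from a constant up to $(\log N)^{\gamma}$. The source hardness is the $4$-colorable $4$-uniform hypergraph hardness of Guruswami, Harsha, H{\aa}stad, Srinivasan, and Varma: in the yes case the hypergraph is $4$-colorable, and in the no case it has no independent set of relative size $2^{-(\log n)^{c}}$ for some $c>0$. This is the quasi-polynomial-time regime we need.

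\textbf{Step 1: composition.} From this base instance, a hypergraph power or product construction produces an $h$-uniform hypergraph with $h=4^t$. Its yes case is a $k$-coloring in which every color appears at most $a$ times per hyperedge, with $h/a=2^{\Omega(t)}$. One natural candidate for the color of a vertex of the product is the tuple of base colors, viewed blockwise. The no case must still have no large independent set, with soundness degrading at most geometrically in $t$. Choosing $t=\Theta(\log\log n)$ keeps the instance size $N=n^{O(t)}$ or $2^{\mathrm{polylog}(n)}$, which is quasi-polynomial. Then $h/a=(\log n)^{\Omega(1)}$, which is $(\log N)^{\gamma}$ for a small constant $\gamma$.

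\textbf{Step 2: the general reduction.} I would prove a reduction from $(h,a,k)$ balanced colorings to UMPS with gap $\Omega(h/a)$, in the following form.
\begin{itemize}
\item Build one machine per hyperedge, holding $h$ unit jobs, one per incident vertex.
\item Build precedence layers indexed by colors, so that all jobs of vertex $v$ on different machines must be processed at a common "time window".
\item In the yes case, schedule color class $c$ in window $c$. Each machine then processes at most $a$ jobs per window, giving makespan about $k\cdot a$ times a layer length.
\item In the no case, any schedule of makespan $T$ induces, for each time window, the set of vertices whose jobs all run in it. Averaging over the $hm/T$-sized windows should give a large set on which no machine is "full", that is, an independent set. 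The target is to force $T\gtrsim h\cdot k$.
\end{itemize}

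\textbf{Step 3: conclusion.} Compose the two steps. A polynomial-time $(\log N)^{\gamma}$-approximation would then decide the coloring promise in quasi-polynomial time, which is a contradiction.

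\textbf{Main obstacle.} The hardest part is the no-case soundness of Step 2. The DAG structure has to force synchronization of each vertex's jobs across machines, so that a short schedule genuinely yields an independent set rather than a fractional, smeared schedule. I would try chaining, per vertex, a long path of dummy jobs that links its copies, and then use an averaging argument over time slices.

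The second obstacle is to show that composition preserves soundness quantitatively while keeping $h/a$ growing. I would first attempt a direct product with a tensor-style independent-set bound, using a Hoffman-type or iterated-density argument.
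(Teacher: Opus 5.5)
Your outer skeleton matches the paper: the GHHSV $4$-colorable $4$-uniform hardness, an Erd\H{o}s--Lov\'asz composition with blockwise colors giving $h=4^t$, $a=3^t$, $k=4^t$, $t=\Theta(\log\log n)$, and a balanced-coloring-to-UMPS reduction with gap $\Theta(h/a)$. The genuine gap is Step 2, which is the actual content of the theorem. It is not a construction, and you leave its soundness open as the ``main obstacle''.

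\textbf{What is missing.} You need a mechanism by which the schedule itself picks a color for each vertex, with the $k$ choices mutually exclusive. In the no case, the same mechanism must confine, for every vertex, some batch of its edge jobs to a window of width $<h$ whose position takes only boundedly many values. Dummy paths on private machines only impose difference constraints between start times. A window fixed by the DAG cannot depend on a coloring the reduction does not know. In UMPS the only other constraint is machine non-overlap, so the exclusivity must come from contention on shared per-vertex machines.

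\textbf{The paper's gadget.}
\begin{itemize}
\item Vertex machines $M_{v,t}$, $t\in\{0,\dots,c\}$, each hold $k$ chains of $a$ unit jobs, one per virtual copy $i\in[k]$ of $H$.
\item Machines $M_{e,i,t}$ carry bridges $T_{v,i,t-1}\prec Y^v_{e,i,t}\prec S_{v,i,t}$.
\item The $k$ cyclic shifts $\chi_i$ of the balanced coloring make the $k$ chains of $v$ tile $[2ta,(k+2t)a)$ in the yes case. Each $Y$-job then gets a length-$a$ window, for makespan $(k+2c)a$.
\item In the no case, load on $M_{v,0}$ forces some tail $T_{v,i(v),0}$ to finish at time $\ge ka$.
\item Follow copy $i(v)$ up the layers to the first $t(v)$ with $s(T_{v,i(v),t})+1<B_{t+1}=ka+t(a+h)$. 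This forces every $Y^v_{e,i(v),t(v)}$ to start in $[B_{t(v)},B_{t(v)}+h-1)$.
\item Pigeonholing over the $kc$ pairs $(i,t)$ with $\varepsilon\le 1/(kc)$ yields a hyperedge whose $h$ unit jobs on one machine must all start in a half-open interval of length $h-1$, which is impossible.
\end{itemize}

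\textbf{Consequences for Step 1.} The no-case bound is $ka+c(a+h)$, not $\approx hk$ against $\approx ka$. The load cost $ka$ is paid in both cases and is amortized only by taking $c\approx k$ layers; at $c=k$ the gap is $(2+h/a)/3$. That in turn demands $\varepsilon\le 1/(kc)=16^{-t}$. So the composed independent-set bound $t\alpha$ must beat roughly $(\log n')^{-4\eta}$ for $t=\lfloor\eta\log\log n'\rfloor$. You track $h/a$ and the soundness loss, but never relate soundness to the number of colors $k=4^t$. This is exactly where GHHSV's superpolylogarithmic soundness $\alpha=2^{-2^{\Omega(\sqrt{\log\log n'})}}$ is needed. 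Your $2^{-(\log n)^c}$ overstates what GHHSV proves, though the true bound suffices.

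\textbf{Smaller points.} The composition has $m^{1+4+\dots+4^{t-1}}$ hyperedges, so $N=(n')^{O(4^t)}$ rather than $n^{O(t)}$. The constant $\gamma>0$ comes from comparing $\log N=(\log n')^{1+O(\eta)}$ with $(4/3)^t=(\log n')^{\Omega(\eta)}$. No Hoffman-type spectral bound is available for PCP hypergraphs. The paper instead uses an elementary fiber argument: the set of $u$ whose fiber $I_u$ contains a hyperedge is independent in $G_1$. This gives the additive loss $t\varepsilon$ directly.
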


Our proof proceeds via reduction from a hard form of hypergraph coloring. Given an $h$-uniform hypergraph $G$, we rely on the following hardness promise: in the yes case, $G$ admits a $k$-coloring in which no color appears more than $a<h$ times in any hyperedge; in the no case, no independent set contains more than an $\varepsilon<1/k^2$ fraction of vertices. By carefully designing vertex and hyperedge machine gadgets to encode the coloring, we show that any hardness of this form yields an inapproximability factor of $O(h/a)$ for unit-length UMPS. We then instantiate our reduction with the hardness result in \cite{Guruswami2018}, where $h/a$ can be made an arbitrarily large constant, thereby proving \Cref{thm:np}. We also analyze how the hypergraph composition method in \cite{ErdosLovasz1975} affects the parameters relevant to our reduction, and thus amplify the gap to $(\log N)^{\Omega(1)}$, proving \Cref{thm:quasi}.

As corollaries, the reductions in \cite{DaviesEtAl2022NonUniform} and \cite{RajaramanStalfaYang2023JobMachineDelays} transfer our lower bounds to two non-uniform communication-delay scheduling problems on identical machines with unit-length jobs: the job-job delay model, where a precedence edge $j \prec k$ has a delay $c_{j,k}$ incurred when $j,k$ run on different machines, and the job-machine delay model, where if job $k$ is scheduled on machine $i$, every predecessor of $k$ scheduled on different machines must finish at least $\rho_{k,i}$ time before $k$ starts. For each of these two problems, there is no polynomial-time constant-factor approximation unless $\classp  = \classnp$; and under $\classnp\nsubseteq \dtime(n^{\mathrm{polylog}(n)})$, there is no polynomial-time $(\log N)^{\gamma}$ approximation for some constant $\gamma>0$, where $N$ is the instance size.

Alongside our main construction, we give a simple PCP-free reduction showing that unit-length UMPS is NP-hard to approximate within a factor of 4/3. This already improves the previous 5/4 inapproximability inherited from unit-length job shop scheduling, and may be of independent interest.  We include it in \Cref{sec:pcp-free}.  %
We also found a simple and clean reduction from $(2+\varepsilon)$-SAT \cite{AustrinGuruswamiHastad2017TwoPlusEpsilonSAT} that gives factor 4/3 NP-hardness of unit-length UMPS, albeit  it is not PCP-free. We omit the details of this reduction.

\subsection{Proof overview}
\label{subsec:proof-overview}

We first explain the reduction in a minimal setting where all parameters are concrete: yes instances have a schedule of makespan $12$, whereas no instances have makespan at least $13$. This is not yet a large approximation gap, but it contains the main ideas of the full construction in \Cref{sec:reduction}.

We use the following hypergraph coloring hardness as the starting point: for every fixed $\varepsilon>0$, it is NP-hard to distinguish a $2$-colorable $4$-uniform hypergraph $H=(V,E)$ from one in which every independent set has size less than $\varepsilon |V|$. Here we take $\varepsilon<1/2$, and the parameters are $k=2, h=4, a=3$, since no hyperedge is monochromatic means each color appears on at most 3 vertices of every hyperedge. This hardness follows from the verifier of Guruswami--H\r{a}stad--Sudan, together with the independent-set strengthening of Holmerin; see \cite{GuruswamiHastadSudan2002,Holmerin2002}. In this subsection, we call the two colors $0$ and $1$, and index the time starting at $0$.

We make two virtual copies $H^0$ and $H^1$ of the hypergraph $H$. Suppose in the yes case there is a proper coloring $\chi:V\to\{0,1\}$. We color $H^0$ by $\chi$ and $H^1$ by the complementary coloring $1-\chi$. Thus the two copies of the same vertex receive different colors, while each copy of the hypergraph is still properly colored. For every vertex $v$, we create two vertex machines  $M_{v,\mathrm L}$ and  $M_{v,\mathrm R}$. On each machine, we put two length-$3$ chains, one for each virtual copy $i\in\{0,1\}$:
\[
J_{v,\mathrm L}^{i,1}\prec J_{v,\mathrm L}^{i,2}\prec J_{v,\mathrm L}^{i,3},
\qquad
J_{v,\mathrm R}^{i,1}\prec J_{v,\mathrm R}^{i,2}\prec J_{v,\mathrm R}^{i,3}.
\]
We write $S_{v,i,\mathrm L},T_{v,i,\mathrm L}$ and $S_{v,i,\mathrm R},T_{v,i,\mathrm R}$ for the heads and tails of these chains. Think of each vertex machine as having two consecutive length-$3$ color slots. Placing the chain for virtual copy $i$ in slot $b\in\{0,1\}$ means that the copy $v^{(i)}$ is assigned color $b$. Since the two virtual copies of a vertex have opposite colors in the yes case, the two chains of $v$ occupy the two slots without conflict. 

The edge machines check this coloring one hyperedge copy at a time. For every hyperedge $e\in E$ and virtual copy $i\in\{0,1\}$, create one edge machine $M_{e,i}$. For each vertex $v\in e$, put a unit job $Y^v_{e,i}$ on $M_{e,i}$ and impose the bridge precedence constraints
\[
T_{v,i,\mathrm L}\prec Y^v_{e,i}\prec S_{v,i,\mathrm R}.
\]
Thus the edge job for $v$ must run after the left copy of the corresponding vertex chain has finished and before the right copy of the same chain begins.

Now consider the yes case. If virtual copy $i$ gives $v$ color $b$, we schedule the chain $J^{i,*}_{v,\mathrm L}$ in the slot $[3b,3(b+1))$, and schedule the chain $J^{i,*}_{v,\mathrm R}$ in the slot shifted two slots to the right, namely $[3(b+2),3(b+3))$. The bridge constraints leave the middle window
\[
[3(b+1),3(b+2))
\]
for edge jobs of color $b$. On the edge machine $M_{e,i}$, schedule all jobs $Y^v_{e,i}$ with copy-$i$ color $b$ in this window. Because the copy-$i$ coloring is proper, a $4$-edge has at most three vertices of color $b$, so those unit jobs fit in the length-$3$ window. This gives a feasible schedule of makespan $12$.

For soundness, we prove the makespan is at least 13. Suppose toward contradiction that there is a schedule of makespan less than 13. For each vertex $v$, choose the virtual copy $i(v)\in\{0,1\}$ whose left tail $T_{v,i,\mathrm L}$ completes later. The left machine $M_{v,\mathrm L}$ contains six unit jobs, so the chosen left tail completes at or after time $6$. On the other hand, the right chain for the same virtual copy has length $3$ and finishes before time $13$, so its head $S_{v,i(v),\mathrm R}$ starts before time $10$. Therefore every bridge job $Y^v_{e,i(v)}$ with $e\ni v$ must start in the interval $[6,9)$: it cannot start before $T_{v,i(v),\mathrm L}$ completes, and it must finish before $S_{v,i(v),\mathrm R}$ starts.

By pigeonholing, some virtual copy $i\in\{0,1\}$ is selected by at least $|V|/2>\varepsilon|V|$ vertices. In the no case this selected set is not independent, so it contains a hyperedge $e$. For every $v\in e$, the job $Y^v_{e,i}$ lies on the same edge machine $M_{e,i}$, and all four of these unit jobs must start in $[6,9)$. However, four unit jobs cannot be scheduled non-overlapping in this length-3 window. This contradiction shows that every no-case schedule has makespan at least $13$.

The full proof in \Cref{sec:reduction} uses the same mechanism with larger parameters. Instead of two colors and two virtual copies, it uses $k$ colors and $k$ virtual copies; a balanced coloring is cyclically shifted across the virtual copies so that the chains of each vertex fill all $k$ color slots. Instead of a single left-to-right bridge, it uses $c$ consecutive bridges. In the no case, the proof still only selects, for each vertex, one virtual copy and one bridge where the corresponding chain is forced across a narrow time window. Pigeonholing over the $kc$ choices gives a large selected vertex set, the independent-set promise supplies a hyperedge inside it, and the edge-machine jobs for that hyperedge are squeezed into too little time. We refer the readers to \Cref{fig:vertex-machine}, \Cref{fig:edge-machine-completeness} and \Cref{fig:soundness-illustration} for illustrations of the vertex machine, the completeness case, and the soundness case, respectively. Although drawn for the full construction, they might be also helpful to understand the toy case in this section.

We introduce the problems and the notation in \Cref{sec:pre}, present our main reduction from hypergraph coloring to unit-length UMPS in \Cref{sec:reduction}, and analyze the hypergraph composition in \Cref{sec:hardness-coloring}.

%% file: sec/pre.tex
\section{Preliminaries}
\label{sec:pre}

\begin{definition}[The Unique-Machine Precedence Scheduling (UMPS) Problem]
\label{def:umps}
    An instance of the unique-machine precedence scheduling problem consists of a set of machines $\mathcal M$, a set of jobs $\mathcal J$, a precedence relation $\prec$ forming a directed acyclic graph, a map $\sigma:\mathcal J \to \mathcal M$ assigning every job to its unique eligible machine, and a function $\ell:\mathcal J \to \mathbb R^+$ specifying the processing time of each job. A feasible schedule assigns each job $j$ a starting time $s(j)$ such that
    
    \begin{enumerate}
        \item $s(j)\ge 0$,
        \item jobs assigned to the same machine do not overlap, and
        \item $j \prec j'$ implies $s(j)+\ell(j) \le s(j')$.
    \end{enumerate}

    The objective is to minimize the makespan
    $$C_{\text{max}}=\max_{j \in \mathcal J} \{s(j)+\ell(j)\}.$$
\end{definition}

\begin{definition}[$a$-Bounded $k$-Coloring of Hypergraphs]
\label{def:ka_coloring}
    An $a$-bounded $k$-coloring, abbreviated as an $(a,k)$-coloring, of a hypergraph $G=(V,E)$ is a mapping $\chi:V \to [k]$, such that for every $e \in E$ and $r \in [k]$,
    $$|e \cap \chi^{-1}(r)|\le a.$$
\end{definition}
This notion is also called an $a$-proper $k$-coloring in \cite{BeersMulas2025}. For an $h$-uniform hypergraph, the canonical notion of $k$-coloring is equivalent to the $(h-1)$-bounded $k$-coloring defined here.

\begin{definition}[The $(h,a,k,\varepsilon)$-Balanced Coloring Problem]\label{def:bch}
    Fix integers $h,k \ge 2$, $1 \le a < h$, and real $0<\varepsilon \le 1/k$. $(h,a,k,\varepsilon)$-Balanced Coloring asks to distinguish between the following two cases for an $h$-uniform hypergraph $H=(V,E)$:
    \begin{itemize}
        \item (Yes) $H$ is $(a,k)$-colorable. %
        \item (No) Any independent set in $H$ is of size $<\varepsilon \cdot |V|$.
    \end{itemize}
\end{definition}

%% file: sec/reduction.tex
\section{A Reduction from Hypergraph Coloring to Unit-Length UMPS}
\label{sec:reduction}

\begin{theorem}\label{thm:main_reduction}
    Fix integers $h,k \ge 2$, $1 \le a < h$, and real $0<\varepsilon \le 1/k$. There is a polynomial-time reduction from an $(h,a,k,\varepsilon)$-Balanced Coloring instance $H$ to a unit-length UMPS instance $\Gamma$, such that
    \begin{itemize}
        \item In the Yes case, $\operatorname{OPT}(\Gamma) \le (k+2c)a$.
        \item In the No case, $\operatorname{OPT}(\Gamma) \ge ka+c(a+h).$
    \end{itemize}
\end{theorem}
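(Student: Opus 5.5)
The plan is to generalize the toy construction of \Cref{subsec:proof-overview} to $k$ virtual copies and $c$ consecutive bridges. For each vertex $v$ I create $c+1$ vertex machines $M_{v,0},\dots,M_{v,c}$. On $M_{v,t}$ I place, for each virtual copy $i\in[k]$, a chain of $a$ unit jobs with head $S_{v,i,t}$ and tail $T_{v,i,t}$. For every hyperedge $e$, copy $i$ and bridge index $t\in\{1,\dots,c\}$, I create an edge machine $M_{e,i,t}$ carrying one unit job $Y^v_{e,i,t}$ for each $v\in e$. These jobs are tied in by the bridge precedences $T_{v,i,t-1}\prec Y^v_{e,i,t}\prec S_{v,i,t}$. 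The construction is clearly polynomial. For $k=2$, $c=1$, $a=3$, $h=4$ it is exactly the toy instance with bounds $12$ and $13$, which serves as a sanity check.

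\textbf{Completeness.} Fix an $(a,k)$-coloring $\chi$ and let copy $i$ color $v$ by $b_i(v)=\chi(v)+i \bmod k$. For each fixed $v$, the map $i\mapsto b_i(v)$ is a bijection onto $[k]$, so this is the cyclic shift. I schedule the chain of copy $i$ on $M_{v,t}$ in the slot $[(b_i(v)+2t)a,(b_i(v)+2t+1)a)$. On a fixed $M_{v,t}$ the $k$ copies use the $k$ distinct slots $b+2t$, $b\in[k]$, so they do not collide. The bridge window for $Y^v_{e,i,t}$ is then $[(b+2t-1)a,(b+2t)a)$ with $b=b_i(v)$. It has length $a$, and windows for different colors are disjoint. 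Since each copy-$i$ coloring is $a$-bounded, at most $a$ vertices of $e$ share a color, so their edge jobs fit in the common window. The last slot index is at most $k-1+2c$, which gives makespan at most $(k+2c)a$.

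\textbf{Soundness.} Suppose a schedule has makespan $<ka+c(a+h)$. Set $B_t=ka+(t-1)(a+h)$. For each $v$, pick the copy $i(v)$ whose tail on $M_{v,0}$ finishes last; since $M_{v,0}$ carries $ka$ unit jobs, that tail finishes at some $f_0\ge ka=B_1$. Following copy $i(v)$, write $s_t,f_t$ for the start of its head and the finish of its tail on $M_{v,t}$, so $f_t\ge s_t+a$. Since $f_c$ is below the makespan, $s_c\le f_c-a<B_c+h$. Let $t(v)$ be the least $t$ with $s_t<B_t+h$. If $t>1$, minimality gives $s_{t-1}\ge B_{t-1}+h$, hence $f_{t-1}\ge B_{t-1}+h+a=B_t$; if $t=1$ then $f_0\ge B_1$ directly. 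So every $Y^v_{e,i(v),t(v)}$ lies inside the fixed interval $[B_{t(v)},B_{t(v)}+h)$, which has length strictly less than $h$ in the sense that the jobs must finish by $s_t<B_t+h$. Pigeonholing over the $kc$ pairs $(i,t)$ gives a class $U$ with $|U|\ge |V|/(kc)$. If $U$ contains a hyperedge $e$, the $h$ unit jobs $Y^v_{e,i,t}$, $v\in e$, all sit on $M_{e,i,t}$ within an interval of length $<h$, which is impossible.

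\textbf{Main obstacle.} The delicate step is choosing a single bridge per vertex so that the squeeze window is common to all vertices, not vertex-dependent. The minimal-index rule with absolute thresholds $B_t$ handles this. The remaining issue is the pigeonhole loss: I need $|V|/(kc)\ge\varepsilon|V|$ to invoke the no-case promise, so I would either take $\varepsilon\le 1/(kc)$ or try to avoid pigeonholing over $t$. For example, one could make the bridge choice depend only on time and not on $v$, pigeonholing over copies alone to keep $\varepsilon\le1/k$. I would first try the latter, and fall back to shrinking $\varepsilon$, which costs nothing in the applications since $\varepsilon$ can be taken arbitrarily small there.
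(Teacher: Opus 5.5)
Your construction, the cyclic-shift completeness schedule, and the soundness argument are the paper's. That includes the arg-max tail on layer $0$, the thresholds $B_t=ka+(t-1)(a+h)$, the first stage at which the selected chain is squeezed, and the pigeonhole over the $kc$ pairs $(i,t)$. There are two small differences:
\begin{itemize}
\item You define $t(v)$ by the head condition $s_t<B_t+h$, where the paper uses the tail condition $s(T_{v,i(v),t})+1<B_{t+1}$. Both work.
\item Your $b_i(v)=\chi(v)+i \bmod k$ ranges over $\{0,\dots,k-1\}$, not $[k]$ as you wrote.
\end{itemize}
Your fallback $\varepsilon\le 1/(kc)$ is exactly what the paper's soundness proof uses (``Recall that we have $\varepsilon\le\frac{1}{kc}$''), and both applications take $\varepsilon\le 1/(kc)$. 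So the $1/k$ in the statement should be read as $1/(kc)$.

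Do not pursue your preferred route of pigeonholing over copies only. With merely $\varepsilon\le 1/k$ the no-case bound is false for large $c$, so no argument can avoid making $\varepsilon$ depend on $c$. Here is a counterexample:
\begin{itemize}
\item Take $k=2$, $a=3$, $h=4$, $\varepsilon=1/2$, and let $H$ be the complete $4$-uniform hypergraph on $7$ vertices. Every independent set has size $3<7/2$, so this is a no-instance.
\item Yet $H$ has a proper coloring $\chi:V\to\{0,1,2\}$ with classes of sizes $3,3,1$.
\item Schedule copy $1$ of $v$ on $M_{v,t}$ in $[3\chi(v)+6t,\,3\chi(v)+6t+3)$, and copy $2$ right after it.
\item Every bridge window then has length $3$. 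On each $M_{e,i,t}$, the windows of the three color classes are consecutive disjoint length-$3$ intervals.
\item Each color occurs at most $3$ times in an edge, so the edge jobs fit.
\end{itemize}
This schedule is feasible with makespan $6c+12<ka+c(a+h)=7c+6$ whenever $c\ge 7$. More generally, a proper $(k+1)$-coloring with vertex offsets $(h-1)\chi(v)$ achieves period $a+h-1$ per layer. That beats the claimed bound once $c>k(h-1)$.
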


It is useful to think of having $k$ virtual copies $H^{(1)},\ldots,H^{(k)}$ of the input hypergraph $H=(V,E)$. In the yes case, there is a balanced coloring of $H$, and thus we can make a cyclic shift on the coloring of $H^{(1)},\ldots,H^{(k)}$, such that the $k$ copies of each vertex are assigned $k$ different colors. In the no case, even if we only consider a partial coloring, where for each vertex, one of its $k$ copies is assigned a fixed color (say color $k$) while the others are left uncolored, in some hypergraph copy the set of colored vertices is too large to stay independent, yielding a monochromatic hyperedge.

The physical machines are indexed by a layer $t \in \{0,\ldots, c\}$.

\subsection{Construction}

\paragraph{Vertex Machines.}

For every vertex $v \in V$ and every layer $t \in \{0,\ldots,c\}$, create one machine $M_{v,t}$. For every virtual copy $i \in [k]$, put on $M_{v,t}$ a chain of $a$ unit jobs
$$J_{v,t}^{i,1} \prec J_{v,t}^{i,2}\prec \ldots \prec J_{v,t}^{i,a}.$$

We write
$$S_{v,i,t}:=J_{v,t}^{i,1} \quad \text{and} \quad T_{v,i,t}:=J_{v,t}^{i,a}$$
for the head and tail of this chain. Thus each vertex machine contains exactly $k$ chains of length $a$. 

\begin{figure}[H]
\centering
\begin{tikzpicture}[x=1cm,y=1cm,>=Latex,font=\small]
  \def\machineX{3.45}
  \def\slotW{2.62}
  \def\slotH{0.96}
  \def\topY{6.75}
  \def\chainY{5.17}
  \def\copyY{4.55}
  \def\bottomY{1.85}
  \def\bottomCopyY{0.98}

  \node[anchor=west,font=\bfseries] at (0.15,8.02)
    {(a) Schematic of a vertex machine: four slots and four virtual-copy chains};

  \node[anchor=east] at (3.10,{\topY+0.48}) {$M_{v,t}$};

  \foreach \r/\col in {1/slotred,2/slotblue,3/slotgreen,4/slotyellow}{
    \pgfmathsetmacro{\xx}{\machineX+(\r-1)*\slotW}
    \path[fill=\col!13] (\xx,\topY) rectangle ({\xx+\slotW},{\topY+\slotH});
    \node[font=\tiny,anchor=north east]
      at ({\xx+\slotW-0.06},{\topY+\slotH-0.04}) {$\r$};
  }
  \draw[line width=0.95pt]
    (\machineX,\topY) rectangle ({\machineX+4*\slotW},{\topY+\slotH});
  \foreach \r in {1,2,3}{
    \pgfmathsetmacro{\divider}{\machineX+\r*\slotW}
    \draw[line width=0.75pt] (\divider,\topY)--(\divider,{\topY+\slotH});
  }
  \draw[<->,thin]
    (\machineX,{\topY-0.22}) -- ({\machineX+\slotW},{\topY-0.22})
    node[midway,below=1pt,font=\scriptsize] {one length-$a$ slot};

  \node[anchor=east] at (3.10,{\chainY+0.38}) {chains of jobs};
  \node[anchor=east] at (3.10,\copyY) {virtual copies of $v$};

  \foreach \i in {1,2,3,4}{
    \pgfmathsetmacro{\cx}{\machineX+(\i-0.5)*\slotW}
    \node[draw,rounded corners=1.5pt,minimum width=2.15cm,minimum height=0.76cm,
          fill=white,inner sep=2pt] at (\cx,{\chainY+0.38}) {$J^{\i,*}_{v,t}$};
    \node[circle,draw,minimum size=0.76cm,inner sep=1pt] at (\cx,\copyY)
      {$v^{(\i)}$};
  }

  \draw[gray!55,line width=0.65pt] (0.15,3.62)--(14.05,3.62);

  \node[anchor=west,font=\bfseries] at (0.15,3.27)
    {(b) One ordering of the job chains on the same vertex machine $M_{v,t}$};

  \node[anchor=east] at (3.10,{\bottomY+0.48}) {$M_{v,t}$};

  \foreach \slot/\chain/\col in {1/2/slotred,2/3/slotblue,3/4/slotgreen,4/1/slotyellow}{
    \pgfmathsetmacro{\xx}{\machineX+(\slot-1)*\slotW}
    \pgfmathsetmacro{\cx}{\xx+0.5*\slotW}
    \path[fill=\col!13] (\xx,\bottomY) rectangle ({\xx+\slotW},{\bottomY+\slotH});
    \node[draw=black,rounded corners=1.5pt,minimum width=1.92cm,minimum height=0.64cm,
          fill=white,inner sep=1.5pt] at (\cx,{\bottomY+0.5*\slotH})
      {$J^{\chain,*}_{v,t}$};
    \node[font=\tiny,anchor=north east]
      at ({\xx+\slotW-0.06},{\bottomY+\slotH-0.04}) {$\slot$};
  }
  \draw[line width=0.95pt]
    (\machineX,\bottomY) rectangle ({\machineX+4*\slotW},{\bottomY+\slotH});
  \foreach \r in {1,2,3}{
    \pgfmathsetmacro{\divider}{\machineX+\r*\slotW}
    \draw[line width=0.75pt] (\divider,\bottomY)--(\divider,{\bottomY+\slotH});
  }

  \node[anchor=east,align=right] at (3.10,\bottomCopyY)
    {encoded colors\\[-1pt](copies in index order)};

  \foreach \i/\col/\r in {1/slotyellow/4,2/slotred/1,3/slotblue/2,4/slotgreen/3}{
    \pgfmathsetmacro{\cx}{\machineX+(\i-0.5)*\slotW}
    \node[circle,draw=\col!78!black,fill=\col!12,line width=0.9pt,
          minimum size=0.82cm,inner sep=1pt] at (\cx,\bottomCopyY) {$v^{(\i)}$};
    \node[font=\scriptsize,text=\col!65!black] at (\cx,{\bottomCopyY-0.62}) {color $\r$};
  }
\end{tikzpicture}
\caption{The vertex-machine gadget for $k=4$. The four length-$a$ slots represent colors $1,2,3,4$ (red, blue, green, and yellow). A chain has no intrinsic color; its position on $M_{v,t}$ determines the color of the corresponding virtual copy. In panel (b), the order $J^{2,*}_{v,t},J^{3,*}_{v,t},J^{4,*}_{v,t},J^{1,*}_{v,t}$ encodes colors $4,1,2,3$ for $v^{(1)},v^{(2)},v^{(3)},v^{(4)}$, respectively.}
\label{fig:vertex-machine}
\end{figure}

The intended encoding is as follows:
\begin{quote}
    If chain $i$ occupies the $r$-th length-$a$ slot on $M_{v,t}$, then the virtual copy $v^{(i)}$ is interpreted as having color $r$ at layer $t$.
\end{quote}

\paragraph{Hyperedge Machines.}

For every hyperedge $e \in E$ and every $i \in [k],t \in [c]$, create one machine $M_{e,i,t}$, which encodes the balancing constraint of the virtual copy $e^{(i)}$. For every $v \in e$, we put one unit job $Y_{e,i,t}^v$ on this machine, and add precedence constraints
$$
T_{v,i,t-1} \prec Y_{e,i,t}^v\prec S_{v,i,t}.
$$
Since $|e|=h$, each machine $M_{e,i,t}$ contains exactly $h$ jobs.

It's easy to see the whole reduction runs in polynomial time.

\subsection{Completeness}
Assume the Yes case and let $\chi:V \to [k]$ be a balanced coloring. For each virtual copy $i \in [k]$, we define the cyclic shift
$$
\chi_i(v):=(\chi(v)+i)\bmod k+1.
$$
For each fixed $v \in V$, the values $\chi_1(v),\ldots, \chi_k(v)$ form a permutation of $[k]$. For each $i \in [k]$, the coloring $\chi_i$ is a relabeling of $\chi$, so we still have
$$|e \cap \chi_i^{-1}(r)|\le a$$
for every $e \in E$ and $r \in [k]$.

\begin{lemma}[Completeness]
    In the Yes case, the constructed UMPS instance has a feasible schedule of makespan at most
    $$C_Y=(k+2c)a.$$
\end{lemma}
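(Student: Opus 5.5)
We give an explicit schedule generalizing the toy construction of \Cref{subsec:proof-overview}, with time indexed from $0$. The idea is that at layer $t$ the chain of copy $i$ of vertex $v$ occupies the slot of color $\chi_i(v)$, shifted right by $2t$ slots. Concretely, for $v\in V$, $i\in[k]$, $t\in\{0,\dots,c\}$, write $r=\chi_i(v)$. We schedule the chain $J^{i,1}_{v,t},\dots,J^{i,a}_{v,t}$ consecutively in the window
\[
W_{r,t}:=\bigl[(r-1+2t)a,\;(r+2t)a\bigr),
\]
so that $s(J^{i,j}_{v,t})=(r-1+2t)a+j-1$. For each hyperedge job $Y^v_{e,i,t}$ with $r=\chi_i(v)$, we use the middle window
\[
B_{r,t}:=\bigl[(r+2t-2)a,\;(r+2t-1)a\bigr).
\]
On each machine $M_{e,i,t}$, the jobs $Y^v_{e,i,t}$ with $\chi_i(v)=r$ are placed consecutively in $B_{r,t}$ in an arbitrary order.

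\textbf{Vertex machines.} Since $\chi_1(v),\dots,\chi_k(v)$ is a permutation of $[k]$, on each $M_{v,t}$ the $k$ chains occupy the $k$ disjoint windows $W_{1,t},\dots,W_{k,t}$. Hence there is no overlap, and chain precedences hold because the jobs are consecutive.

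\textbf{Hyperedge machines.} The windows $B_{r,t}$ for distinct $r$ are disjoint. Each $B_{r,t}$ has length $a$ and receives $|e\cap\chi_i^{-1}(r)|\le a$ unit jobs, by the balancedness of $\chi_i$ noted just before the lemma. So there is no overlap on $M_{e,i,t}$.

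\textbf{Bridge constraints.} Fix $v\in e$, $i$, $t\in[c]$, and $r=\chi_i(v)$; note that the color $r$ is the same at layers $t-1$ and $t$. The tail $T_{v,i,t-1}$ completes at $(r+2(t-1))a=(r+2t-2)a$, which is exactly the left endpoint of $B_{r,t}$. The head $S_{v,i,t}$ starts at $(r-1+2t)a$, which is exactly the right endpoint of $B_{r,t}$. So $T_{v,i,t-1}\prec Y^v_{e,i,t}\prec S_{v,i,t}$ holds.

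\textbf{Makespan.} The latest job is a vertex job at $r=k$, $t=c$, finishing at $(k+2c)a$. Every hyperedge window ends by $(k+2c-1)a$. So the makespan is at most $(k+2c)a=C_Y$.

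There is no real obstacle. The only point requiring care is the index bookkeeping: the shift of $2a$ per layer must exactly leave the gap $B_{r,t}$ between consecutive layers' chains for the same color, and the cyclic relabeling must keep both the permutation property on vertex machines and the $a$-boundedness on edge machines. Both are ensured by the properties of $\chi_i$ stated before the lemma.
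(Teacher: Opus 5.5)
Your proposal is correct and is essentially the paper's own proof: the same chain placement $[(\chi_i(v)-1+2t)a,(\chi_i(v)+2t)a)$ on every layer, the same edge windows $[(r+2t-2)a,(r+2t-1)a)$ (called $W_{t,r}$ in the paper), and the same observation that $T_{v,i,t-1}$ ends and $S_{v,i,t}$ starts exactly at the endpoints of that window. The only difference is cosmetic: your $W_{r,t}$ (chain slots) and $B_{r,t}$ (edge windows) collide with the paper's $W_{t,r}$ (edge windows) and its soundness thresholds $B_t$, so renaming them would avoid confusion.
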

\begin{proof}
    Fix $v \in V$, a layer $ t \in \{0,\ldots,c\}$, and a virtual copy $i \in [k]$. Put the entire chain $J_{v,t}^{i,1},\ldots,J_{v,t}^{i,a}$ in the interval
    $$[(\chi_i(v)-1+2t)a,(\chi_i(v)+2t)a).$$
    Since the $k$ values $\chi_1(v),\ldots,\chi_k(v)$ form a permutation, the $k$ chains exactly fill $[2ta,(k+2t)a)$ on $M_{v,t}$. Furthermore, for every vertex, each of the $c+1$ layers uses the same permutation of chains.

    Fix $t \in \{1,\ldots,c\}$ and $r \in [k]$, define the window
    $$W_{t,r}=[(r+2t-2)a, (r+2t-1)a).$$
    The windows $W_{t,1},\ldots,W_{t,k}$ are consecutive and disjoint on each edge machine.
    
    Fix $M_{e,i,t}$. Schedule all jobs $Y_{e,i,t}^v$ with $\chi_i(v)=r$ in window $W_{t,r}$. There are at most $a$ of them, so they fit. For any $v \in e$ with $\chi_i(v)=r$, the tail $T_{v,i,t-1}$ completes exactly at the left endpoint of $W_{t,r}$, while the head $S_{v,i,t}$ starts exactly at its right endpoint. Therefore, all precedence constraints are satisfied. The last vertex job finishes at $(k+2c)a$ and every edge job finishes earlier.
\end{proof}

\begin{figure}[H]
\centering
\begin{tikzpicture}[x=1cm,y=1cm,>=Latex,font=\scriptsize]
  \def\slotW{1.78}
  \def\machineH{0.78}
  \def\edgeH{0.78}
  \def\chainW{1.46}
  \def\xzero{2.55}
  \def\labelX{2.22}
  \pgfmathsetmacro{\edgeX}{\xzero+\slotW}
  \pgfmathsetmacro{\bottomX}{\xzero+2*\slotW}

  \def\topone{10.90}
  \def\toptwo{10.00}
  \def\topthree{9.10}
  \def\topfour{8.20}
  \def\topfive{7.30}
  \def\edgeY{6.02}
  \def\bottomone{4.18}
  \def\bottomtwo{3.28}
  \def\bottomthree{2.38}
  \def\bottomfour{1.48}
  \def\bottomfive{0.58}
  \pgfmathsetmacro{\figureCenter}{\edgeX+2*\slotW}

  \node[anchor=center,font=\bfseries\small] at (\figureCenter,12.62)
    {Completeness schedule around one hyperedge machine};
  \draw[->,line width=0.65pt]
    (\xzero,12.18) -- ({\xzero+6*\slotW},12.18)
    node[right,font=\scriptsize] {time};

  \foreach \v/\r/\yy in {1/2/\topone,2/1/\toptwo,3/1/\topthree,4/3/\topfour,5/4/\topfive}{
    \node[anchor=east] at (\labelX,{\yy+0.5*\machineH}) {$M_{v_\v,t-1}$};
    \foreach \s/\col in {1/slotred,2/slotblue,3/slotgreen,4/slotyellow}{
      \pgfmathsetmacro{\sx}{\xzero+(\s-1)*\slotW}
      \path[fill=\col!13] (\sx,\yy) rectangle ({\sx+\slotW},{\yy+\machineH});
    }
    \draw[line width=0.78pt] (\xzero,\yy) rectangle ({\xzero+4*\slotW},{\yy+\machineH});
    \foreach \s in {1,2,3}{
      \pgfmathsetmacro{\sx}{\xzero+\s*\slotW}
      \draw[line width=0.55pt] (\sx,\yy)--(\sx,{\yy+\machineH});
    }
  }

  \foreach \v/\r/\yy in {1/2/\bottomone,2/1/\bottomtwo,3/1/\bottomthree,4/3/\bottomfour,5/4/\bottomfive}{
    \node[anchor=east] at (\labelX,{\yy+0.5*\machineH}) {$M_{v_\v,t}$};
    \foreach \s/\col in {1/slotred,2/slotblue,3/slotgreen,4/slotyellow}{
      \pgfmathsetmacro{\sx}{\bottomX+(\s-1)*\slotW}
      \path[fill=\col!13] (\sx,\yy) rectangle ({\sx+\slotW},{\yy+\machineH});
    }
    \draw[line width=0.78pt] (\bottomX,\yy) rectangle ({\bottomX+4*\slotW},{\yy+\machineH});
    \foreach \s in {1,2,3}{
      \pgfmathsetmacro{\sx}{\bottomX+\s*\slotW}
      \draw[line width=0.55pt] (\sx,\yy)--(\sx,{\yy+\machineH});
    }
  }

  \node[anchor=east] at (\labelX,{\edgeY+0.5*\edgeH}) {$M_{e,i,t}$};
  \foreach \r/\col in {1/slotred,2/slotblue,3/slotgreen,4/slotyellow}{
    \pgfmathsetmacro{\wx}{\edgeX+(\r-1)*\slotW}
    \path[fill=\col!13] (\wx,\edgeY) rectangle ({\wx+\slotW},{\edgeY+\edgeH});
    \node[anchor=north,font=\scriptsize] at ({\wx+0.5*\slotW},{\edgeY-0.12}) {$W_{t,\r}$};
  }
  \draw[line width=0.9pt] (\edgeX,\edgeY) rectangle ({\edgeX+4*\slotW},{\edgeY+\edgeH});
  \foreach \r in {1,2,3}{
    \pgfmathsetmacro{\wx}{\edgeX+\r*\slotW}
    \draw[line width=0.62pt] (\wx,\edgeY)--(\wx,{\edgeY+\edgeH});
  }
  \draw[<->,line width=0.55pt]
    (\edgeX,{\edgeY-0.63}) -- ({\edgeX+\slotW},{\edgeY-0.63})
    node[midway,below=1pt,font=\scriptsize] {length $a$};

  \pgfmathsetmacro{\jobvTwoX}{\edgeX+0.28*\slotW}
  \pgfmathsetmacro{\jobvThreeX}{\edgeX+0.68*\slotW}
  \pgfmathsetmacro{\jobvOneX}{\edgeX+1.28*\slotW}
  \pgfmathsetmacro{\jobvFourX}{\edgeX+2.28*\slotW}
  \pgfmathsetmacro{\jobvFiveX}{\edgeX+3.28*\slotW}
  \pgfmathsetmacro{\edgeMidY}{\edgeY+0.5*\edgeH}
  \pgfmathsetmacro{\edgeTopJobY}{\edgeY+0.64}
  \pgfmathsetmacro{\edgeBottomJobY}{\edgeY+0.14}

  \foreach \v/\r/\col/\topY/\botY/\jobX in {
    1/2/slotblue/\topone/\bottomone/\jobvOneX,
    2/1/slotred/\toptwo/\bottomtwo/\jobvTwoX,
    3/1/slotred/\topthree/\bottomthree/\jobvThreeX,
    4/3/slotgreen/\topfour/\bottomfour/\jobvFourX,
    5/4/slotyellow/\topfive/\bottomfive/\jobvFiveX}{
      \pgfmathsetmacro{\topTailX}{\xzero+(\r-0.5)*\slotW+0.5*\chainW}
      \pgfmathsetmacro{\bottomHeadX}{\bottomX+(\r-0.5)*\slotW-0.5*\chainW}
      \pgfmathsetmacro{\topMidY}{\topY+0.5*\machineH}
      \pgfmathsetmacro{\bottomMidY}{\botY+0.5*\machineH}
      \draw[-{Latex[length=1.45mm,width=1.15mm]},draw=\col,
            line width=1.00pt,rounded corners=2.6pt]
        (\topTailX,\topMidY) -- (\jobX,\topMidY) -- (\jobX,\edgeTopJobY);
      \draw[-{Latex[length=1.45mm,width=1.15mm]},draw=\col,
            line width=1.00pt,rounded corners=2.6pt]
        (\jobX,\edgeBottomJobY) -- (\jobX,\bottomMidY) -- (\bottomHeadX,\bottomMidY);
  }

  \foreach \v/\r/\yy/\col in {1/2/\topone/slotblue,2/1/\toptwo/slotred,3/1/\topthree/slotred,4/3/\topfour/slotgreen,5/4/\topfive/slotyellow}{
    \pgfmathsetmacro{\cx}{\xzero+(\r-0.5)*\slotW}
    \node[draw=\col!90!black,fill=white,line width=0.60pt,rounded corners=1.2pt,
          minimum width=\chainW cm,minimum height=0.46cm,
          inner xsep=1pt,inner ysep=0.7pt,
          font=\fontsize{6.2}{7.2}\selectfont]
      at (\cx,{\yy+0.5*\machineH}) {$J^{i,*}_{v_\v,t-1}$};
  }
  \foreach \v/\r/\yy/\col in {1/2/\bottomone/slotblue,2/1/\bottomtwo/slotred,3/1/\bottomthree/slotred,4/3/\bottomfour/slotgreen,5/4/\bottomfive/slotyellow}{
    \pgfmathsetmacro{\cx}{\bottomX+(\r-0.5)*\slotW}
    \node[draw=\col!90!black,fill=white,line width=0.60pt,rounded corners=1.2pt,
          minimum width=\chainW cm,minimum height=0.46cm,
          inner xsep=1pt,inner ysep=0.7pt,
          font=\fontsize{6.2}{7.2}\selectfont]
      at (\cx,{\yy+0.5*\machineH}) {$J^{i,*}_{v_\v,t}$};
  }

  \foreach \x/\v/\col in {
    \jobvTwoX/2/slotred,
    \jobvThreeX/3/slotred,
    \jobvOneX/1/slotblue,
    \jobvFourX/4/slotgreen,
    \jobvFiveX/5/slotyellow}{
      \node[draw=\col!90!black,fill=white,rounded corners=1.1pt,
            minimum width=0.66cm,minimum height=0.50cm,inner xsep=0.5pt,inner ysep=0.8pt,
            font=\scriptsize]
        at (\x,\edgeMidY) {$Y^{v_\v}$};
  }

  \foreach \r/\col in {1/slotred,2/slotblue,3/slotgreen,4/slotyellow}{
    \pgfmathsetmacro{\tx}{\xzero+(\r-0.5)*\slotW}
    \node[text=\col!70!black,font=\tiny] at (\tx,{\topone+\machineH+0.18}) {color \r};
    \pgfmathsetmacro{\bx}{\bottomX+(\r-0.5)*\slotW}
    \node[text=\col!70!black,font=\tiny] at (\bx,{\bottomone+\machineH+0.18}) {color \r};
  }
\end{tikzpicture}
\caption{An illustrative completeness schedule for one hyperedge machine, with $k=4$, $e=\{v_1,\ldots,v_5\}$, and $a\ge 2$. Here $\chi_i(v_2)=\chi_i(v_3)=1$, $\chi_i(v_1)=2$, $\chi_i(v_4)=3$, and $\chi_i(v_5)=4$. The edge jobs are grouped by color: each $Y^{v_j}_{e,i,t}$ is placed in $W_{t,\chi_i(v_j)}$. The upper and lower copies of $J^{i,*}_{v_j,\cdot}$ occupy the same color slot on the two adjacent vertex layers. Each colored path depicts the two precedence constraints $T_{v_j,i,t-1}\prec Y^{v_j}_{e,i,t}\prec S_{v_j,i,t}$.}
\label{fig:edge-machine-completeness}
\end{figure}

\begin{remark}
    The chain-slot interpretation is only used to construct the schedule in the Yes case. In the No case we do \textit{not} assume that the schedule keeps each chain contiguous, nor do we try to decode a complete $k$-coloring from its chain order.
\end{remark}

\subsection{Soundness}

Assume the No case. Recall that we have $\varepsilon \le \frac{1}{kc}$. For $t \in \{1,\ldots,c+1\}$, let
$$B_t =ka+(t-1)(a+h),$$
and for $t \in \{1,\ldots,c\}$, define the canonical interval at stage $t$ to be
$$I_t=[B_t,B_t+h).$$
$I_t$ can be viewed as the no-case analogue of the $W_{t,k}$ in the yes case. The difference here is that we assume each window is of length $h$, where in the yes case we only need length $a$. The proof idea is that, even though the vertex chains may not follow the way we schedule them in the yes case, and their jobs may be delayed or interleaved arbitrarily; any delay will only squeeze the room for later layers, implying that for some $t \in \{1,\ldots,c\}$ the actual usable window for $M_{e,i,t}$ is contained in the canonical window $I_t$.

\begin{figure}[H]
\centering
\resizebox{\textwidth}{!}{%
\begin{tikzpicture}[x=1cm,y=1cm,>=Latex,font=\scriptsize]
  \def\xL{1.45}
  \def\xR{14.35}
  \def\u{0.53}
  \pgfmathsetmacro{\BOne}{\xL+6*\u}
  \pgfmathsetmacro{\BTwo}{\BOne+7*\u}
  \pgfmathsetmacro{\BTwoH}{\BTwo+5*\u}
  \pgfmathsetmacro{\BThree}{\BTwo+7*\u}
  \pgfmathsetmacro{\xMid}{0.5*(\xL+\xR)}

  \def\axisY{8.15}
  \def\yZero{6.64}
  \def\yOne{4.94}
  \def\yEdge{3.92}
  \def\yTwo{2.90}
  \def\trackH{0.94}
  \def\jobW{0.46}
  \def\jobH{0.62}

  \tikzset{
    machine/.style={draw=black,fill=gray!2,line width=0.82pt},
    selectedjob/.style={draw=slotblue!92!black,fill=white,line width=0.88pt,
                        rounded corners=1.1pt,minimum width=\jobW cm,
                        minimum height=\jobH cm,text width=0.40cm,
                        align=center,inner sep=0pt},
    tailjob/.style={draw=black,fill=white,line width=0.78pt,
                    rounded corners=1.1pt,minimum width=\jobW cm,
                    minimum height=\jobH cm,text width=0.40cm,
                    align=center,inner sep=0pt},
    otherjob/.style={draw=gray!58,fill=gray!13,line width=0.56pt,
                     rounded corners=1.0pt,minimum width=\jobW cm,
                     minimum height=\jobH cm,text width=0.40cm,
                     align=center,inner sep=0pt},
    yjob/.style={draw=slotgreen!88!black,fill=white,line width=0.88pt,
                 rounded corners=1.1pt,minimum width=\jobW cm,
                 minimum height=\jobH cm,text width=0.40cm,
                 align=center,inner sep=0pt,font=\tiny}
  }

  \node[anchor=center,font=\bfseries\small] at (\xMid,9.55)
    {Soundness Illustration};
  \node[anchor=center] at (\xMid,9.06)
    {Example $t(v)=2$; write $S_r:=S_{v,i(v),r}$ and $T_r:=T_{v,i(v),r}$};

  \draw[-{Latex[length=1.8mm,width=1.35mm]},line width=0.78pt]
    (\xL,\axisY)--(\xR,\axisY) node[right=2pt] {time};
  \foreach \xx/\lab in {\BOne/B_1,\BTwo/B_2,\BThree/B_3}{
    \draw[line width=0.98pt] (\xx,{\axisY-0.14})--(\xx,{\axisY+0.14});
    \node[anchor=south] at (\xx,{\axisY+0.10}) {$\lab$};
    \draw[densely dashed,gray!45,line width=0.46pt]
      (\xx,{\axisY-0.17})--(\xx,{\yTwo-0.64});
  }

  \draw[<->,thin] (\BTwo,{\axisY-0.40})--(\BTwoH,{\axisY-0.40})
    node[midway,below=1pt] {$h$};
  \draw[<->,thin] (\BTwoH,{\axisY-0.40})--(\BThree,{\axisY-0.40})
    node[midway,below=1pt] {$a$};

  \foreach \yy in {\yZero,\yOne,\yEdge,\yTwo}{
    \draw[machine] (\xL,{\yy-0.5*\trackH}) rectangle (\xR,{\yy+0.5*\trackH});
  }
  \node[anchor=east] at ({\xL-0.18},\yZero) {$M_{v,0}$};
  \node[anchor=east] at ({\xL-0.18},\yOne) {$M_{v,1}$};
  \node[anchor=east] at ({\xL-0.18},\yEdge) {$M_{e,i(v),2}$};
  \node[anchor=east] at ({\xL-0.18},\yTwo) {$M_{v,2}$};

  \node[otherjob] at ({\xL+0.5*\u},\yZero) {};
  \node[tailjob]  at ({\xL+1.5*\u},\yZero) {};
  \node[otherjob] at ({\xL+3.5*\u},\yZero) {};
  \node[tailjob]  at ({\xL+5.5*\u},\yZero) {};
  \node[otherjob] at ({\xL+7.5*\u},\yZero) {};
  \node[selectedjob] (tailzero) at ({\xL+9.5*\u},\yZero) {$T_0$};

  \draw[<->,thin] (\xL,{\yZero-0.78})--(\BOne,{\yZero-0.78})
    node[midway,below=1pt] {at least $ka=B_1$};

  \node[otherjob] at ({\xL+5.5*\u},\yOne) {};
  \node[tailjob]  at ({\xL+8.5*\u},\yOne) {};
  \node[otherjob] at ({\xL+9.5*\u},\yOne) {};
  \node[otherjob] at ({\xL+11.5*\u},\yOne) {};
  \node[selectedjob] (tailone) at ({\xL+13.5*\u},\yOne) {$T_1$};
  \node[tailjob] at ({\xL+16.5*\u},\yOne) {};

  \node[anchor=south west] at ({\BTwo+0.38},{\yOne+0.48})
    {$s(T_1)+1\ge B_2$};

  \path[fill=slotblue!9]
    (\BTwo,{\yEdge-0.5*\trackH}) rectangle (\BTwoH,{\yEdge+0.5*\trackH});
  \draw[black,line width=1.08pt]
    (\BTwo,{\yEdge-0.5*\trackH})--(\BTwo,{\yEdge+0.5*\trackH});
  \draw[black,line width=1.08pt]
    (\BTwoH,{\yEdge-0.5*\trackH})--(\BTwoH,{\yEdge+0.5*\trackH});

  \def\actualL{8.95}
  \def\actualR{10.50}
  \path[fill=slotgreen!17]
    (\actualL,{\yEdge-0.5*\trackH}) rectangle (\actualR,{\yEdge+0.5*\trackH});
  \draw[draw=slotgreen!78!black,line width=0.72pt]
    (\actualL,{\yEdge-0.5*\trackH})--(\actualL,{\yEdge+0.5*\trackH});
  \draw[draw=slotgreen!78!black,line width=0.72pt]
    (\actualR,{\yEdge-0.5*\trackH})--(\actualR,{\yEdge+0.5*\trackH});
  \node[yjob] (edgejob) at (9.48,\yEdge) {$Y^v$};

  \node[text=slotblue!88!black,anchor=east] (ilabel)
    at ({\BTwo-0.16},{\yEdge+0.19}) {$I_2$};
  \draw[-{Latex[length=1.25mm,width=1.0mm]},draw=slotblue!88!black,
        line width=0.72pt]
    (ilabel.east) -- ({\BTwo+0.25},{\yEdge+0.19});

  \node[text=slotgreen!60!black,anchor=west,align=left] (actualLabel)
    at ({\BThree+0.20},\yEdge)
    {actual usable\\[-1pt]interval for $Y^v$};
  \draw[-{Latex[length=1.35mm,width=1.05mm]},draw=slotgreen!72!black,
        line width=0.78pt,shorten >=1.2pt]
    (\actualR,\yEdge) -- (actualLabel.west);

  \draw[black,line width=0.82pt]
    (\xL,{\yEdge-0.5*\trackH})--(\xR,{\yEdge-0.5*\trackH});
  \draw[black,line width=0.82pt]
    (\xL,{\yEdge+0.5*\trackH})--(\xR,{\yEdge+0.5*\trackH});

  \node[otherjob] at ({\xL+8.5*\u},\yTwo) {};
  \node[tailjob]  at ({\xL+9.5*\u},\yTwo) {};
  \node[otherjob] at ({\xL+11.5*\u},\yTwo) {};
  \draw[draw=slotblue!90!black,line width=0.78pt]
    ({\xL+17.5*\u},\yTwo)--({\xL+19.5*\u},\yTwo);
  \node[selectedjob] (headtwo) at ({\xL+17.5*\u},\yTwo) {$S_2$};
  \node[tailjob] at ({\xL+18.5*\u},\yTwo) {};
  \node[selectedjob] (tailtwo) at ({\xL+19.5*\u},\yTwo) {$T_2$};

  \draw[-{Latex[length=1.45mm,width=1.15mm]},draw=slotblue,
        line width=0.98pt,rounded corners=2.2pt,shorten >=2.2pt]
    (tailone.east) -- (\actualL,\yOne) -- (\actualL,\yEdge) -- (edgejob.west);
  \draw[-{Latex[length=1.45mm,width=1.15mm]},draw=slotblue,
        line width=0.98pt,rounded corners=2.2pt,shorten >=1.4pt]
    (edgejob.south east) -- (9.96,{\yEdge-0.38}) -- (9.96,\yTwo) -- (headtwo.west);

  \node[anchor=north east,align=right] at ({\xL+17.15*\u},{\yTwo-0.50})
    {$s(S_2)<B_2+h$};
  \node[anchor=north west] at ({\xL+19.70*\u},{\yTwo-0.50})
    {$s(T_2)+1<B_3$};

  \draw[<->,thin]
    (headtwo.west|-{0,\yTwo-1.08})--(tailtwo.east|-{0,\yTwo-1.08})
    node[midway,below=1pt] {span at least $a$};
\end{tikzpicture}%
}
\caption{Soundness argument for one vertex, illustrated when $t(v)=2$. The selected tails satisfy $s(T_1)+1\ge B_2$ and $s(T_2)+1<B_3$, while the length-$a$ selected chain forces $s(S_2)<B_2+h$. The two precedence constraints therefore squeeze $Y^v_{e,i(v),2}$ into the highlighted stage-$2$ window.}
\label{fig:soundness-illustration}
\end{figure}

\begin{lemma}[Soundness]
    In the No case, any feasible schedule has makespan at least 
    $$C_N=ka+c(a+h).$$
\end{lemma}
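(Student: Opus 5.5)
Suppose, for contradiction, that some feasible schedule $s$ has makespan $< C_N = B_{c+1}$. We follow the toy argument of \Cref{subsec:proof-overview}. For each vertex $v$ we select a virtual copy $i(v)\in[k]$ and a stage $t(v)\in[c]$ with the following property: for every $e\ni v$, the job $Y^v_{e,i(v),t(v)}$ is forced to start inside the canonical interval $I_{t(v)}=[B_{t(v)},B_{t(v)}+h)$ and, being a unit job, to finish by time $B_{t(v)}+h$.

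\emph{Choosing $i(v)$.} Let $i(v)$ be the copy whose layer-$0$ tail $T_{v,i,0}$ completes last among the $k$ chains on $M_{v,0}$. Machine $M_{v,0}$ carries $ka$ unit jobs, so the last of them finishes no earlier than $ka$. That last job is the tail of some chain, because every other job of a chain precedes its own tail. Hence $s(T_{v,i(v),0})+1\ge B_1$.

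\emph{Choosing $t(v)$.} Write $T_r:=T_{v,i(v),r}$ and $S_r:=S_{v,i(v),r}$. Then $T_c$ completes before $C_N=B_{c+1}$. Let $t(v)\in[c]$ be the smallest $t$ with $s(T_t)+1<B_{t+1}$. Such a $t$ exists because $t=c$ qualifies. By minimality, $s(T_{t-1})+1\ge B_t$; for $t=1$ this is the previous paragraph. The chain at layer $t$ has length $a$, so $s(S_t)\le s(T_t)-(a-1)<B_{t+1}-a=B_t+h$. The bridge constraints $T_{t-1}\prec Y^v_{e,i(v),t}\prec S_t$ then give $B_t\le s(Y)$ and $s(Y)+1\le s(S_t)<B_t+h$. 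So $Y^v_{e,i(v),t(v)}$ occupies a unit slot inside $[B_t,B_t+h-1)$.

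\emph{Pigeonhole.} Each vertex is labelled by one of $kc$ pairs $(i,t)$, so some pair $(i,t)$ is shared by a set $U$ with $|U|\ge |V|/(kc)\ge\varepsilon|V|$. We need the stated condition $\varepsilon\le 1/(kc)$ here, which is also why the strict inequality in the No case matters. By the No-case promise, $U$ is not independent, so it contains a hyperedge $e$. All $h$ jobs $Y^v_{e,i,t}$ with $v\in e$ lie on the single machine $M_{e,i,t}$ and must be placed, without overlap, inside a window of length $h-1$. This is impossible, and the contradiction proves the lemma.

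\emph{Main obstacle.} The delicate step is choosing $t(v)$ so that the window is pinned from both sides at the same stage. The minimality argument does this, but the arithmetic must come out exactly: $B_{t+1}-B_t=a+h$, and after subtracting the chain length $a$, the head $S_t$ must start before $B_t+h$. This is what makes the usable window strictly shorter than $h$. I would double-check the integer off-by-one at this point, using that all start times may be taken to be integers.
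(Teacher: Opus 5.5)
Your proposal is correct and follows the paper's proof step for step: the same choice of $i(v)$ as the copy whose layer-$0$ tail finishes last, the same minimal stage $t(v)$ with $s(T_{v,i(v),t(v)-1})+1\ge B_{t(v)}$, the same bound $s(S_{v,i(v),t(v)})<B_{t(v)}+h$, and the same pigeonhole over the $kc$ pairs using $\varepsilon\le 1/(kc)$. One small imprecision: for real-valued start times it is the start time $s(Y^v_{e,i(v),t(v)})$, not the whole occupied unit slot, that lies in $[B_t,B_t+h-1)$ (this is how the paper phrases it), but the contradiction goes through either way, and your remark that start times may be taken integral also makes your phrasing exact.
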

\begin{proof}
    Consider an arbitrary feasible schedule and $C_{\max}$ denote its makespan. Suppose for contradiction that $C_{\max} < ka+c(a+h)$. For each vertex $v$, choose $i(v) \in [k]$ maximizing the completion time of $T_{v,i,0}$, the tail of the $i$-th length-$a$ chain:
    $$i(v):=\arg\max_{i \in [k]}\{s(T_{v,i,0})+1\}.$$
    The $ka$ unit jobs on $M_{v,0}$ all finish by the last chain tail, so
    $$s(T_{v,i(v),0})+1 \ge ka = B_1.$$
    Now let $t(v)$ be the first stage that this specific chain tail ends before $B_{t+1}$:
    $$t(v):=\min\{t \in [c]: s(T_{v,i(v),t})+1<B_{t+1}\}.$$
    This stage exists because all jobs are completed by $C_{\max}<B_{c+1}$.
    By the minimality of $t(v)$, together with the layer-0 inequality above when $t(v)=1$, we know the completion time of the chain tail in the previous layer is at least $B_{t(v)}$:
    $$s(T_{v,i(v),t(v)-1})+1 \ge B_{t(v)}.$$
    The starting time of the head of this chain is at most
    $$s(S_{v,i(v),t(v)}) \le s(T_{v,i(v),t(v)})-(a-1)<B_{t(v)+1}-a = B_{t(v)}+h.$$
    For every hyperedge $e \ni v$, the precedence constraints
    $$T_{v,i(v),t(v)-1} \prec Y_{e,i(v),t(v)}^v \prec S_{v,i(v),t(v)}$$
    therefore imply $s(Y_{e,i(v),t(v)}^v)\in[B_{t(v)},B_{t(v)}+h-1)$.

    Now, each vertex has selected one of the $kc$ pairs $(i(v),t(v))$. By the pigeonhole principle, some pair $(i,t)$ is selected by at least 
    $$\frac{|V|}{kc} \ge \varepsilon |V|$$
    vertices. This set is not independent in the No case, so it contains a hyperedge $e \in E$. However, this means $h=|e|$ jobs are assigned in the interval $[B_{t},B_t+h-1)$ on the same machine $M_{e,i,t}$, a contradiction.
\end{proof}

\subsection{Unit-Length UMPS Hardness}
For fixed $h,a,k$, the gap in \Cref{thm:main_reduction} satisfies 
$$\lim_{c \to \infty}\left(\frac{ka+c(a+h)}{(k+2c)a}\right) = \frac{a+h}{2a}.$$

We now prove our main theorems (\Cref{thm:np} and \Cref{thm:quasi}) by plugging suitable hypergraph coloring hardness into \Cref{thm:main_reduction}.

In \cite{Guruswami2018}, the authors proved that in the yes case in any hyperedge, any color appears at least $Q-1$ times. Since the hypergraph is $Qk$-uniform, this implies the most frequent color appears at most \[
Qk-(k-1)(Q-1)=Q+k-1
\] times in every hyperedge.

\begin{lemma}[Adapted from {\protect\cite[Theorem 1.1]{Guruswami2018}}]
\label{thm:hardness-hypergraph-coloring}
Assume $\classp\neq\classnp$. For any constants $\varepsilon>0$ and $Q,k\geq 2$, no algorithm can solve $(Qk, Q+k-1,k,\varepsilon)$-Balanced Coloring in polynomial time.
\end{lemma}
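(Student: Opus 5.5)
This lemma is a reformulation of \cite[Theorem 1.1]{Guruswami2018}, so the plan is to show that the Guruswami--Lee instances are valid instances of $(Qk, Q+k-1,k,\varepsilon)$-Balanced Coloring. That theorem states the following. For constants $Q,k\ge 2$ and $\varepsilon>0$, it is NP-hard to distinguish two cases for a $Qk$-uniform hypergraph $H=(V,E)$. In the first case, $H$ has a $k$-coloring $\chi$ in which every hyperedge contains at least $Q-1$ vertices of each color. In the second case, every independent set of $H$ has size less than $\varepsilon|V|$.

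\emph{Yes case.} We translate the color lower bounds into the upper bound required by \Cref{def:ka_coloring}. Fix $e\in E$ and a color $r\in[k]$. The other $k-1$ colors each occupy at least $Q-1$ vertices of $e$. Hence
\[
|e\cap\chi^{-1}(r)|\le Qk-(k-1)(Q-1)=Q+k-1.
\]
So $\chi$ is a $(Q+k-1,k)$-coloring.

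\emph{No case.} The no case of Guruswami--Lee is exactly the independent-set condition in \Cref{def:bch}, so it carries over directly.

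\emph{Parameter constraints.} \Cref{def:bch} requires $1\le a<h$, where $a=Q+k-1$ and $h=Qk$. We have
\[
Qk-(Q+k-1)=(Q-1)(k-1)\ge 1 \quad\text{for } Q,k\ge 2,
\]
so $a<h$. It also requires $\varepsilon\le 1/k$. Shrinking $\varepsilon$ only makes the no-case promise stronger, so it suffices to prove hardness for $\min(\varepsilon,1/k)$. If the cited theorem is stated only for sufficiently small $\varepsilon$, the same monotonicity handles that too.

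\emph{Conclusion.} The identity map is therefore a polynomial-time reduction from the Guruswami--Lee promise problem to $(Qk,Q+k-1,k,\varepsilon)$-Balanced Coloring. A polynomial-time algorithm for the latter would decide an NP-hard problem, contradicting $\classp\neq\classnp$.

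The main obstacle is bookkeeping against the precise form of the cited theorem. If Guruswami--Lee phrase the no case as ``no independent set of relative size $\varepsilon$'' with non-strict inequalities, or phrase the yes case as ``each color appears at least $Q-1$ times'' in some other form, I will adjust the constants so the strict inequality $<\varepsilon|V|$ holds, for example by invoking the theorem with $\varepsilon/2$.
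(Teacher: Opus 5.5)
Your proposal is correct and matches the paper's own justification: the paper also takes the Guruswami--Lee no case verbatim. It likewise converts the yes-case guarantee (every color appears at least $Q-1$ times in each $Qk$-uniform hyperedge) into the upper bound $Qk-(k-1)(Q-1)=Q+k-1$, and your extra checks that $a<h$ and that $\varepsilon$ may be taken at most $1/k$ are bookkeeping the paper leaves implicit.
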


\thmnp*
\begin{proof}[Proof of \Cref{thm:np}]
    Choose $k$ such that $(k+1)/2>\gamma$. Since for $a=Q+k-1$ and $h=Qk$,
    $$\lim_{Q \to \infty}\lim_{c \to \infty} \frac{ka+c(a+h)}{(k+2c)a}=\frac{k+1}{2},$$
    we can choose $Q,c$ large enough so that the ratio is larger than $\gamma$. Setting $\varepsilon = \frac{1}{kc}$ and plugging the parameters into \Cref{thm:hardness-hypergraph-coloring} and \Cref{thm:main_reduction} finish the proof.
\end{proof}

For our polylogarithmic inapproximability, we start with the following result of Guruswami, Harsha, H\r{a}stad, Srinivasan, and Varma \cite[Theorem~1.2]{GuruswamiHarshaHastadSrinivasanVarma2017}%
, and use \Cref{thm:coloring-amplification} to amplify the parameters of our interest. 

\begin{restatable}[Hardness of coloring a $4$-colorable $4$-uniform hypergraph \cite{GuruswamiHarshaHastadSrinivasanVarma2017}]{lemma}{ghhfour}
\label{lem:hardness-4-coloring}
There is a quasi-polynomial time reduction from a $3\mathrm{SAT}$ instance $\Gamma$ of size $n$ to a $4$-uniform hypergraph $G=(V,E)$ with $n'$ vertices, where $n\le n'\le n^{\mathrm{polylog}(n)}$, such that:
\begin{itemize}
    \item (Yes) If $\Gamma$ is satisfiable, then $G$ is $4$-colorable.
    \item (No) If $\Gamma$ is unsatisfiable, then every independent set in $G$ has size less than
    \[\frac{n'}{2^{2^{\Omega(\sqrt{\log\log n'})}}}.
    \]
\end{itemize}
\end{restatable}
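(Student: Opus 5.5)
This lemma is imported from \cite[Theorem~1.2]{GuruswamiHarshaHastadSrinivasanVarma2017}, so the plan is to recall how that construction goes and check that its guarantees match the form stated here, rather than to re-derive the hardness. The argument has three parts: a PCP, a coloring verifier built on it, and a final translation plus parameter bookkeeping.

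\textbf{The PCP.} Start from 3SAT of size $n$ and apply a Label Cover / Outer PCP with soundness $\delta$. Its size is $n^{O(\log(1/\delta))}$ times factors exponential in the alphabet size. Take $\delta$ slightly subconstant, so that the total size is $n'\le n^{\mathrm{polylog}(n)}$. This choice is exactly what makes the reduction quasi-polynomial rather than polynomial.

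\textbf{The inner verifier.} Compose with the low-degree-long-code based inner verifier of GHHSV. This verifier reads four positions and accepts iff they are not all equal, viewed as a $4$-uniform hyperedge on the encoding positions.
\begin{itemize}
\item \emph{Completeness.} Encode a satisfying assignment by low-degree long codes. The two bits of $\mathbb F_2^2$ give a $4$-coloring, and every test hyperedge receives at least two distinct colors.
\item \emph{Soundness.} Suppose some set of positions has density $\ge \mu$ and contains no hyperedge. By Fourier analysis on the low-degree long code, such a set yields influential coordinates. These coordinates decode to a labeling that satisfies a $\mathrm{poly}(\mu)$ fraction of the Label Cover constraints.
\end{itemize}

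\textbf{Translation and bookkeeping.} This soundness is contrapositive to the No case: if $\delta$ is below $\mathrm{poly}(\mu)$, every independent set has density below $\mu$. The step that needs care is relating $\mu$ to $n'$. The degree parameter of the low-degree long code and the Label Cover alphabet are tied to $\log\log n'$. Solving the resulting constraints gives $\mu = 2^{-2^{\Omega(\sqrt{\log\log n'})}}$.

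The bound $n\le n'$ holds trivially, since the instance can be padded. For the paper's purposes, the main obstacle, namely the exact dependence of the soundness parameter on $n'$, is taken verbatim from \cite{GuruswamiHarshaHastadSrinivasanVarma2017}. The lemma is therefore cited rather than reproven.
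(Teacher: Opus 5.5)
Your proposal matches the paper: the paper gives no proof of this lemma and imports it directly from \cite[Theorem~1.2]{GuruswamiHarshaHastadSrinivasanVarma2017}, so your citation is the entire argument and the recalled sketch is not load-bearing.

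If you keep the sketch, two corrections are needed. First, the low-degree long code's encoding cost per label-cover vertex is roughly $2^{(\log R)^{d}}$, not exponential in the alphabet size $R$. With an exponential cost you could only afford $\delta\ge 1/\mathrm{polylog}(n)$, which yields a polylogarithmic-type soundness rather than $2^{-2^{\Omega(\sqrt{\log\log n'})}}$. Second, padding to ensure $n\le n'$ must use disjoint copies of $G$, since adding isolated vertices would create large independent sets.
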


\begin{restatable}[Balanced Coloring Amplification]{theorem}{coloringamplification}
\label{thm:coloring-amplification}
Let $G=(V,E)$ be an $h$-uniform hypergraph with $n=|V|$ vertices and $m=|E|$ hyperedges. For every integer $t\ge 1$ and every real $0<\varepsilon<1/t$, there is a  deterministic polynomial-time reduction that constructs an $h^t$-uniform hypergraph $G^{(t)}=(V^{(t)},E^{(t)})$ such that
\begin{itemize}
    \item $|V^{(t)}|=n^t$ and $|E^{(t)}|=m^{1+h+\cdots+h^{t-1}}$;
    \item if $G$ is $(a,k)$-colorable, then $G^{(t)}$ is $(a^t,k^t)$-colorable;
    \item if every independent set in $G$ has size less than $\varepsilon |V|$, then every independent set in $G^{(t)}$ has size less than $t\varepsilon |V^{(t)}|$.
\end{itemize}
\end{restatable}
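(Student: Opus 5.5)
We take the Erdős–Lovász composition and define $G^{(t)}$ recursively. Set $G^{(1)}=G$. Given $G^{(t-1)}$ on vertex set $V^{t-1}$, let $G^{(t)}$ have vertex set $V^{(t)}=V\times V^{t-1}$, so the vertex set is partitioned into $n$ blocks $\{u\}\times V^{t-1}$, one for each $u\in V$. A hyperedge of $G^{(t)}$ is specified by two pieces of data: a hyperedge $f=\{u_1,\dots,u_h\}\in E$ of the outer copy of $G$, and, for each $j\in[h]$, a hyperedge $g_j\in E^{(t-1)}$ placed inside block $u_j$. The hyperedge is then $\bigcup_j \{u_j\}\times g_j$.

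This gives uniformity $h\cdot h^{t-1}=h^t$. The edge count satisfies $|E^{(t)}|=m\cdot|E^{(t-1)}|^h$, so the exponent obeys $e_t=1+h e_{t-1}$ with $e_1=1$, giving $e_t=1+h+\dots+h^{t-1}$. We have $|V^{(t)}|=n^t$ by induction. Polynomial time holds for fixed $t,h$, since the edge count is polynomial.

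For completeness, we take an $(a,k)$-coloring $\chi$ of $G$ and an $(a^{t-1},k^{t-1})$-coloring $\chi'$ of $G^{(t-1)}$. Color $(u,w)\mapsto(\chi(u),\chi'(w))\in[k]\times[k^{t-1}]$. Fix a hyperedge and a color $(r,r')$. The vertices of that color lie in blocks $u_j$ with $\chi(u_j)=r$; there are at most $a$ such $j$. Within each such block they lie in $\{u_j\}\times g_j$ and have $\chi'$-color $r'$, so there are at most $a^{t-1}$ per block. The total is at most $a^t$, as required.

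For soundness we induct, proving that any independent set $I\subseteq V^{(t)}$ with $|I|\ge t\varepsilon n^t$ contains a hyperedge. Call a block $u$ \emph{heavy} if $I_u:=\{w:(u,w)\in I\}$ has size at least $(t-1)\varepsilon n^{t-1}$. Since light blocks contribute less than $(t-1)\varepsilon n^{t-1}$ each, they contribute less than $(t-1)\varepsilon n^t$ in total. Hence heavy blocks contribute more than $\varepsilon n^t$, and since each block contributes at most $n^{t-1}$, there are more than $\varepsilon n$ heavy blocks. For $t=1$ the claim is just the hypothesis; the base case has no light-block threshold. By the hypothesis on $G$, the set of heavy blocks contains a hyperedge $f=\{u_1,\dots,u_h\}$. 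By the induction hypothesis applied to $G^{(t-1)}$, each $I_{u_j}$ with $|I_{u_j}|\ge(t-1)\varepsilon n^{t-1}$ contains a hyperedge $g_j$. The combined hyperedge $\bigcup_j\{u_j\}\times g_j$ then lies in $I$, contradicting independence.

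The main point of care is the strict versus non-strict inequalities in this averaging step. The hypothesis is phrased as ``independent sets have size $<\varepsilon|V|$'', so we need the number of heavy blocks to be at least $\varepsilon n$, and the threshold for heaviness must match the inductive statement exactly; the bookkeeping above arranges this. The condition $\varepsilon<1/t$ only serves to keep the statement nontrivial, ensuring $t\varepsilon<1$.
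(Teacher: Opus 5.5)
Your proposal is correct and follows essentially the paper's proof: the same recursive Erd\H{o}s--Lov\'asz composition $G^{(t)}=G\times G^{(t-1)}$, the same product coloring for completeness, and the same fiber-counting argument for soundness. The only difference is packaging. The paper proves a general two-factor lemma, showing that blocks with non-independent fibers form an independent set of $G_1$, which gives the bound $(\varepsilon_1+\varepsilon_2)|V_1||V_2|$. You instead inline this into the induction via a heavy-block threshold, which is the contrapositive of the same count.
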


The proof of \Cref{thm:coloring-amplification} is deferred to \Cref{sec:hardness-coloring}. Now we prove \Cref{thm:quasi} by instantiating \Cref{thm:coloring-amplification} with \Cref{lem:hardness-4-coloring}.

\thmquasi*
\begin{proof}[Proof of \Cref{thm:quasi}]
    Let $\Gamma$ be a $3\mathrm{SAT}$ instance of size $n$. By \Cref{lem:hardness-4-coloring}, we obtain a $4$-uniform hypergraph $G=(V,E)$ with $n'=|V|$ vertices, where $n\le n'\le n^{\mathrm{polylog}(n)}$. There is a constant $\beta>0$ such that, in the No case, every independent set in $G$ has size less than $\alpha n'$, where
    \[
        \alpha:=2^{-2^{\beta\sqrt{\log\log n'}}}.
    \]

    Fix a sufficiently small constant $\eta>0$ and set
    \[
        t=\floorbra{\eta\log\log n'},
    \]
    For sufficiently large $n'$, $\alpha<1/t$, so we may apply
    \Cref{thm:coloring-amplification} to $G$ with soundness parameter $\alpha$.
    Denote the amplified hypergraph by $G^{(t)}=(V^{(t)},E^{(t)})$. In the Yes case, $G^{(t)}$ is $(3^t,4^t)$-colorable. In the No case, every independent set in $G^{(t)}$ has size less than
    \[
        t\alpha |V^{(t)}|.
    \]

    \begin{samepage}
    Setting $c=4^t$, for sufficiently large $n'$, we have
    \[
        \varepsilon := t\alpha
        =t2^{-2^{\beta\sqrt{\log\log n'}}}
        \le 16^{-t} = \frac{1}{kc},
    \]
    where the inequality follows from
    $2^{\beta\sqrt{\log\log n'}}/\log\log n'\to\infty$.
    This is as required by \Cref{thm:main_reduction}.
    \end{samepage}
    The amplified hypergraph has size at most
    \[
        (n')^{O(4^t)}=(n')^{\mathrm{polylog}(n')},
    \]
    and therefore the whole reduction runs in time $n^{\mathrm{polylog}(n)}$. Let $N$ denote the size of the resulting UMPS instance. The gap produced by \Cref{thm:main_reduction} is
    \[
    \frac{ka+c(a+h)}{(k+2c)a}
    =\frac{2+(4/3)^t}{3}.
    \]
    Moreover, $N \le (n')^{O(4^t)}$, thus
    \[
            \log N\le O(4^t\log n')=(\log n')^{1+\eta\log 4+o(1)}
    \]
    while
    \[
        (4/3)^t=(\log n')^{\eta\log(4/3)+o(1)}.
    \]
    Hence, for some constant $\gamma>0$, the above gap is at least $(\log N)^\gamma$ for sufficiently large $n'$.

    A polynomial-time $(\log N)^\gamma$-approximation for unit-length UMPS would therefore solve $3\mathrm{SAT}$ in $n^{\mathrm{polylog}(n)}$ time, contradicting $\classnp\nsubseteq \dtime(n^{\mathrm{polylog}(n)})$.
\end{proof}

%% file: sec/coloring.tex
\section{Balanced Coloring Amplification}
\label{sec:hardness-coloring}
In this section, we prove \Cref{thm:coloring-amplification}, restated below.

\coloringamplification*

\subsection{Hypergraph Composition}

We use the following composition of hypergraphs, which dates back to \cite[Construction~(d)]{ErdosLovasz1975}.

\begin{definition}[Composition of Hypergraphs \cite{ErdosLovasz1975}]
\label{def:composition_product}
The composition of an $h_1$-uniform hypergraph $G_1=(V_1,E_1)$ and an $h_2$-uniform hypergraph $G_2=(V_2,E_2)$, denoted as $G_1\times G_2$, is the hypergraph $H=(V_H,E_H)$ defined by
\[
V_H=V_1\times V_2
\]
and
\[
E_H=\left\{
\bigcup_{u\in e}\bigl(\{u\}\times f_u\bigr)
:\ e\in E_1 \text{ and } f_u\in E_2 \text{ for every }u\in e
\right\}.
\]
\end{definition}

\begin{lemma}
\label{lem:composition_size}
Let $G_1=(V_1,E_1)$ be an $h_1$-uniform hypergraph and $G_2=(V_2,E_2)$ be an $h_2$-uniform hypergraph. Their composition $H=(V_H,E_H)$ satisfies
\[
|V_H|=|V_1|\cdot |V_2|,
\qquad
|E_H|=|E_1|\cdot |E_2|^{h_1},
\]
and $H$ is $h_1h_2$-uniform.
\end{lemma}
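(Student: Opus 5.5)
The statement has three parts: the vertex count, the uniformity, and the edge count. I will check each directly from \Cref{def:composition_product}.

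\textbf{Vertices.} The identity $|V_H|=|V_1|\cdot|V_2|$ holds by definition, since $V_H=V_1\times V_2$.

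\textbf{Uniformity.} Fix $e\in E_1$ and a choice $(f_u)_{u\in e}$ with each $f_u\in E_2$. The sets $\{u\}\times f_u$ for distinct $u\in e$ are pairwise disjoint, because their first coordinates differ. Each such set has size $|f_u|=h_2$, so the union has size $|e|\cdot h_2=h_1h_2$. Hence $H$ is $h_1h_2$-uniform.

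\textbf{Edge count.} Consider the map
\[
\Phi:(e,(f_u)_{u\in e})\mapsto \bigcup_{u\in e}\{u\}\times f_u .
\]
There are $|E_1|\cdot|E_2|^{h_1}$ such tuples, since each $e$ has exactly $h_1$ vertices and each receives an independent choice of $f_u\in E_2$. By definition $\Phi$ is onto $E_H$, so the claimed count holds once $\Phi$ is shown to be injective. For injectivity I recover the tuple from the image $F=\Phi(e,(f_u))$:
\begin{itemize}
\item $e$ is the projection of $F$ onto the first coordinate. This uses that every $f_u$ is nonempty, which holds because $h_2\ge 1$.
\item For each $u\in e$, $f_u$ is the fiber $\{w:(u,w)\in F\}$.
\end{itemize}
So distinct tuples give distinct edges. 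Throughout, $E_1$ and $E_2$ are treated as sets, i.e.\ simple hypergraphs with no repeated edges, so this gives exactly the count $|E_1|\cdot|E_2|^{h_1}$.

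\textbf{Main obstacle.} The only point needing care is the injectivity of $\Phi$, i.e.\ that no edge of $H$ is produced twice. The projection-and-fiber decoding above settles it, so I do not expect any real difficulty.
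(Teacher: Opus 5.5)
Your proposal is correct and follows the same route as the paper. The vertex count is immediate, uniformity comes from $H$'s hyperedges being disjoint unions of $h_1$ blocks $\{u\}\times f_u$ of size $h_2$, and the edge count comes from identifying each hyperedge of $H$ with a tuple $(e,(f_u)_{u\in e})$; your projection-and-fiber decoding usefully spells out the injectivity that the paper's phrase ``every hyperedge in $H$ is specified by'' leaves implicit.
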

\begin{proof}
$|V_H|$ follows immediately from the definition. $|E_H|=|E_1|\cdot |E_2|^{h_1}$ because every hyperedge in $H$ is specified by a hyperedge $e \in E_1$ and $h_1$ many hyperedges in $E_2$.
Furthermore, each hyperedge in $H$ is a disjoint union of $h_1$ parts, each consisting of $h_2$ pairs, and thus has size $h_1h_2$.
\end{proof}

We next prove the completeness and soundness properties of the above composition in \Cref{lem:composition_completeness} and \Cref{lem:composition_soundness}, respectively.

\begin{lemma}
\label{lem:composition_completeness}
Suppose $G_1=(V_1,E_1)$ is an $(a_1,k_1)$-colorable hypergraph and $G_2=(V_2,E_2)$ is an $(a_2,k_2)$-colorable hypergraph. Then their composition $H=(V_H,E_H)$ is an $(a_1a_2,k_1k_2)$-colorable hypergraph.
\end{lemma}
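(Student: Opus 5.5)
The plan is to use the product coloring. Let $\chi_1:V_1\to[k_1]$ be an $(a_1,k_1)$-coloring of $G_1$ and $\chi_2:V_2\to[k_2]$ an $(a_2,k_2)$-coloring of $G_2$. Define $\chi:V_H\to[k_1]\times[k_2]$ by $\chi(u,w)=(\chi_1(u),\chi_2(w))$. After identifying $[k_1]\times[k_2]$ with $[k_1k_2]$ by any fixed bijection, this is a $k_1k_2$-coloring of $V_H=V_1\times V_2$.

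Next we check the bound on every hyperedge. Fix a hyperedge $E'=\bigcup_{u\in e}(\{u\}\times f_u)\in E_H$, with $e\in E_1$ and $f_u\in E_2$, and fix a color $(r_1,r_2)$. A pair $(u,w)\in E'$ receives color $(r_1,r_2)$ exactly when $u\in e\cap\chi_1^{-1}(r_1)$ and $w\in f_u\cap\chi_2^{-1}(r_2)$. The sets $\{u\}\times f_u$ are disjoint for distinct $u$. Hence
\[
|E'\cap\chi^{-1}(r_1,r_2)|=\sum_{u\in e\cap\chi_1^{-1}(r_1)}|f_u\cap\chi_2^{-1}(r_2)|.
\]
Each summand is at most $a_2$ because $\chi_2$ is $a_2$-bounded on $f_u\in E_2$. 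The number of summands is at most $a_1$ because $\chi_1$ is $a_1$-bounded on $e\in E_1$. So the count is at most $a_1a_2$, which proves the lemma.

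There is no real obstacle. The one point needing care is that the pieces $\{u\}\times f_u$ are disjoint, so the count splits as a sum over $u$; this holds because their first coordinates differ, as already noted in \Cref{lem:composition_size}.
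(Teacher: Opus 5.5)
Your proof is correct and follows the paper's argument: the product coloring $(\chi_1(u),\chi_2(w))$ composed with a bijection onto $[k_1k_2]$, followed by splitting the color count over the disjoint pieces $\{u\}\times f_u$ to bound it by $a_1a_2$. Making the disjointness of the pieces explicit is a helpful addition; the paper leaves it implicit.
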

\begin{proof}
Let $\chi_1$ be an $(a_1,k_1)$-coloring of $G_1$ and $\chi_2$ be an $(a_2,k_2)$-coloring of $G_2$. For a vertex $v=(v_1,v_2)\in V_H$, define
\[
\chi(v)=\phi(\chi_1(v_1),\chi_2(v_2)),
\]
where $\phi$ is an arbitrary bijection between $[k_1]\times[k_2]$ and $[k_1k_2]$.

Fix a hyperedge
\[
e=\bigcup_{u\in e_1}\bigl(\{u\}\times f_u\bigr)\in E_H,
\]
where $e_1\in E_1$ and $f_u\in E_2$ for every $u\in e_1$.
Fix a color $r\in[k_1k_2]$ and let $(r_1,r_2)=\phi^{-1}(r)$. Then
\begin{align*}
|e\cap\chi^{-1}(r)|
&=\sum_{u\in e_1\cap\chi_1^{-1}(r_1)}
    |f_u\cap\chi_2^{-1}(r_2)|\\
&\le a_2\,|e_1\cap\chi_1^{-1}(r_1)|\\
&\le a_1a_2. \qedhere
\end{align*}
\end{proof}

\begin{lemma}
\label{lem:composition_soundness}
Let $G_1=(V_1,E_1)$ and $G_2=(V_2,E_2)$ be two hypergraphs, and let $0<\varepsilon_1,\varepsilon_2<1$. Suppose every independent set in $G_i$ has size less than $\varepsilon_i|V_i|$ for $i\in\{1,2\}$. Then every independent set in their composition $H=(V_H,E_H)$ has size less than
\[
(\varepsilon_1+\varepsilon_2)|V_1||V_2|.
\]
\end{lemma}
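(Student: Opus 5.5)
Take an arbitrary independent set $I\subseteq V_H=V_1\times V_2$ with $|I|\ge(\varepsilon_1+\varepsilon_2)|V_1||V_2|$; we show $I$ contains a hyperedge of $H$, a contradiction. The argument decomposes $I$ into fibers. For each $u\in V_1$ let $I_u:=\{w\in V_2:(u,w)\in I\}$ be the fiber of $I$ over $u$, and call $u$ \emph{heavy} if $|I_u|\ge\varepsilon_2|V_2|$. Let $U\subseteq V_1$ be the set of heavy vertices.

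\textbf{Step 1: many heavy vertices.} A light fiber contributes fewer than $\varepsilon_2|V_2|$ elements to $I$, and a heavy fiber contributes at most $|V_2|$. Hence
\[
|I|<|U|\,|V_2|+(|V_1|-|U|)\,\varepsilon_2|V_2|\le |U|\,|V_2|+\varepsilon_2|V_1||V_2|.
\]
Combined with the assumption $|I|\ge(\varepsilon_1+\varepsilon_2)|V_1||V_2|$, this gives $|U|>\varepsilon_1|V_1|$. (Note the strict inequality: light fibers are strictly below the threshold, which is exactly what gives strictness here, as long as $V_1\setminus U$ is nonempty. If every vertex of $V_1$ is heavy, then $|U|=|V_1|>\varepsilon_1|V_1|$ trivially, since $\varepsilon_1<1$.)

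\textbf{Step 2: extract a hyperedge.} By the hypothesis on $G_1$, $U$ is not independent, so there is some $e\in E_1$ with $e\subseteq U$. For each $u\in e$, the fiber $I_u$ has size at least $\varepsilon_2|V_2|$, so by the hypothesis on $G_2$ it is not independent and contains some $f_u\in E_2$. Then $\bigcup_{u\in e}\{u\}\times f_u$ is a hyperedge of $H$ by \Cref{def:composition_product}, and it lies inside $I$. This contradicts the independence of $I$.

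There is no real obstacle here. The only delicate point is keeping the inequalities strict, and the heaviness threshold $\ge\varepsilon_2|V_2|$ is chosen precisely so that heavy fibers are non-independent while light fibers are strictly small. For \Cref{thm:coloring-amplification}, one then iterates $G^{(t)}=G^{(t-1)}\times G$ and applies this lemma together with \Cref{lem:composition_size} and \Cref{lem:composition_completeness}, accumulating $t\varepsilon$ in the soundness bound.
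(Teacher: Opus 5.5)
Your proof is correct and is essentially the paper's argument. Both decompose $I$ into fibers $I_u$, bound the fibers over one set of first coordinates using the $G_2$ hypothesis, and bound the size of the remaining set of first coordinates using the $G_1$ hypothesis, via gluing hyperedges $f_u\subseteq I_u$ over some $e\in E_1$. The only cosmetic difference is that the paper splits $V_1$ by whether $I_u$ is independent and argues directly that this set $S$ is independent in $G_1$, whereas you split by the size threshold $|I_u|\ge\varepsilon_2|V_2|$ and argue by contradiction. Note also that $|U|\ge\varepsilon_1|V_1|$ would already force $U$ to be non-independent, so your extra strictness bookkeeping in Step 1 is not needed.
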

\begin{proof}
Let $I\subseteq V_H$ be an independent set. For every $u\in V_1$, define the fiber
\[
I_u=\{v\in V_2:(u,v)\in I\},
\]
and let
\[
S=\{u\in V_1:I_u\text{ is not an independent set in }G_2\}.
\]
We claim that $S$ is an independent set in $G_1$. Suppose otherwise that there is some $e\in E_1$ with $e\subseteq S$. For every $u\in e$, the set $I_u$ is not independent in $G_2$, so we can choose a hyperedge $f_u\in E_2$ with $f_u\subseteq I_u$. By \Cref{def:composition_product},
\[
\bigcup_{u\in e}\bigl(\{u\}\times f_u\bigr)
\]
is a hyperedge of $H$ contained in $I$, a contradiction. Therefore, $|S|<\varepsilon_1|V_1|$.

For every $u\notin S$, the fiber $I_u$ is independent in $G_2$, and hence $|I_u|<\varepsilon_2|V_2|$. Thus,
\begin{align*}
|I|
&=\sum_{u\in V_1}|I_u|\\
&<|S||V_2|+|V_1|\varepsilon_2|V_2|\\
&<(\varepsilon_1+\varepsilon_2)|V_1||V_2| \ . \qedhere
\end{align*}
\end{proof}

\subsection{Proof of \Cref{thm:coloring-amplification}}

\begin{proof}[Proof of \Cref{thm:coloring-amplification}]
Construct $G^{(t)}$ as the $t$-th composition power of $G$, namely $G^{(1)}=G$ and $G^{(s)}=G\times G^{(s-1)}$ for $2\le s\le t$. Iterating \Cref{lem:composition_size} gives
\[
|V(G^{(t)})|=n^t,
\qquad
|E(G^{(t)})|=m^{1+h+\cdots+h^{t-1}},
\]
and $G^{(t)}$ is $h^t$-uniform. The construction runs deterministically in time polynomial in the graph size. %

For completeness, suppose $G$ is $(a,k)$-colorable. We prove by induction on $s$ that $G^{(s)}$ is $(a^s,k^s)$-colorable. The case $s=1$ is the assumption. If $s>1$, then $G$ is $(a,k)$-colorable and, by induction, $G^{(s-1)}$ is $(a^{s-1},k^{s-1})$-colorable. Applying \Cref{lem:composition_completeness} to $G\times G^{(s-1)}$ shows that $G^{(s)}$ is $(a^s,k^s)$-colorable.

For soundness, suppose every independent set in $G$ has size less than $\varepsilon |V|$. We prove by induction on $s$ that every independent set in $G^{(s)}$ has size less than $s\varepsilon |V(G^{(s)})|$. The case $s=1$ is the assumption. For $s>1$, apply \Cref{lem:composition_soundness} with $G_1=G$, $G_2=G^{(s-1)}$, $\varepsilon_1=\varepsilon$, and $\varepsilon_2=(s-1)\varepsilon$. This gives the threshold $s\varepsilon$, completing the proof.
\end{proof}

%% file: sec/pcp-free.tex
\section{A PCP-free $4/3$-Inapproximability}
\label{sec:pcp-free}

In this appendix, we give a direct reduction that yields the NP-inapproximability of unit-length UMPS. For every fixed integer $k\ge 1$, we obtain a gap
\[
A_k:=3k+1 \quad\text{vs.}\quad B_k:=4k+1,
\]
and therefore rule out any approximation factor less than $4/3$ in polynomial time. The proof is self-contained and does not involve the PCP theorem.

\begin{theorem}
\label{thm:pcp-free-gap-family}
For every fixed $k\ge 1$, there is a polynomial-time reduction from a 3SAT instance $\Phi$ to a unit-length UMPS instance  $I_k(\Phi)$, such that
\begin{itemize}
    \item (Yes) If $\Phi$ is satisfiable, then $\operatorname{OPT}(I_k(\Phi)) \le A_k$.
    \item (No) If $\Phi$ is not satisfiable, then $\operatorname{OPT}(I_k(\Phi)) \ge B_k$.
\end{itemize}
\end{theorem}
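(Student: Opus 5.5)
We reduce directly from 3SAT, mimicking the toy construction of \Cref{subsec:proof-overview} but replacing the PCP-based hypergraph promise by a local gadget analysis. The idea is that a variable gadget forces every schedule of makespan below $B_k=4k+1$ to "choose" a truth value, and a clause gadget makes all three of its literals choosing \textsc{false} infeasible. Concretely, for each variable $x$ we create a stack of vertex machines with two chains (one per literal $x,\bar x$), in the spirit of the vertex machines $M_{v,t}$ with $k=2$ virtual copies. Placing the chain of literal $\ell$ in the first slot encodes $\ell=\textsc{true}$. For each clause $C=(\ell_1\vee\ell_2\vee\ell_3)$ we create a clause machine carrying one bridge job $Y^{\ell_j}_C$ per literal, with precedences $T_{\ell_j}\prec Y^{\ell_j}_C\prec S'_{\ell_j}$ between consecutive layers of the variable gadget, as in the hyperedge machines. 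The parameter $k$ will be the number of repeated layers (the analogue of $c$), used to push the ratio toward $4/3$. We will tune chain lengths so each layer costs $3$ time units in the yes case and $4$ in the no case, plus an additive $1$.

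\textbf{Completeness.} Given a satisfying assignment, we schedule every true literal's chains early and every false literal's chains late in each layer, shifting layers by $3$. The bridge jobs of true literals then get an early window and false literals a later one. A clause machine has at most two false literals, and the window sizes are set so that (at most) two jobs fit in the late window while true literals use the early window. Checking that total length $3k+1$ suffices is a direct timeline computation like the completeness lemma above.

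\textbf{Soundness.} Suppose a schedule has makespan at most $4k$. Following the soundness lemma, for each variable we choose the literal whose chain tail on the first layer finishes later. We then walk the layers with thresholds $B_t=1+4(t-1)$ and find a stage $t(x)$ where that literal's bridge jobs are squeezed into a narrow window. Here pigeonholing is unavailable, since we have no independent-set promise, so the argument must be \emph{local and deterministic}. This is the main obstacle. My plan is to make the stage choice uniform by giving the gadget a rigid structure: every vertex machine is completely full in the critical interval, so any delay at one layer propagates to all later layers. That should force $t(x)=t$ to be the same for all variables, e.g.\ the first layer where slack is used, found by a global minimum argument over layers rather than per vertex. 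Defining the assignment ``chosen literal is false'' then makes every clause have a literal forced into a narrow window. An unsatisfied clause has all three literals' bridge jobs forced into a window of length $<3$ on one machine, which is a contradiction.

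If uniformizing the stage fails, the fallback is to replace 3SAT by a gadget where each clause machine compares only literals of a single layer, at the cost of a constant number of extra machines per variable to synchronize layers. In either version, the delicate part is verifying that interleavings of chains (which the remark says we cannot assume contiguous) still give the inequalities $s(T)+1\ge B_t$ and $s(S)<B_t+h$. I would prove these by the same counting of unit jobs per machine used in the soundness lemma.
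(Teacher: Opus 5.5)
You correctly identify the crux: without an independent-set promise there is no pigeonhole, so every variable must be squeezed at the \emph{same} stage. But the proposal supplies no mechanism for this, and the one you suggest does not provide it. In your gadget, distinct variables are linked by no precedence path; they interact only through contention on clause machines. Fullness of $M_{x,t}$ controls how delay propagates within $x$'s own stack of layers, but nothing ties $t(x)$ to $t(y)$. For an unsatisfied clause, the three squeezed bridge jobs can therefore sit on clause machines of different stages, and no contradiction follows. The fallback only names the needed synchronizer without constructing it.

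What is missing is a \emph{global clock} that runs at speed $3$ in the yes case and speed $4$ in the no case. The paper builds it recursively. Each variable machine carries dummy jobs anchored by $I_{r-1}(\Phi)\prec J^{\mathsf D}_{i,r}\prec S_{i,r}\prec I_{k-r}(\Phi)$, where the anchors are smaller instances of the same reduction, shared by all variables. By induction ($\operatorname{OPT}(I_i)\le 3i+1$ resp.\ $\ge 4i+1$), the dummy may start at $3r-2$ in the yes case, while in the no case it is pinned to $[4r-3,4r-1)$ for every variable simultaneously. The value $b_i$ that is late at layer $1$ can never overtake the dummy, and induction along the length-$3$ $H$-chains gives $s(J^{b_i}_{i,r})\ge 4r-2$ for all $r$; this is the uniform stage you need. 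The anchor must itself carry the gap: a plain chain long enough to pin the dummy at $4r-3$ in the no case would forbid the start time $3r-2$ required in the yes case.

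Independently, the claimed per-layer costs $3$ versus $4$ are not delivered by the bridge gadget you sketch. In the main reduction the per-layer costs are $2a$ versus $a+h$, and the gap comes entirely from $a<h$. For 3SAT, a satisfied clause may have all three literals true, so in your completeness schedule the early bridge window must hold three jobs. With chains of length $a$ this forces a layer offset of at least $a+3$, which is exactly the $a+h=a+3$ per layer that the soundness argument extracts, so no gap accrues.

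The paper avoids per-stage clause checks altogether. It uses a single check between the last left layer and the right layer, with cyclic precedences $J^{\operatorname{sat}(\ell_{j,t})}_{u(j,t),k-1}\prec P_{j,t}\prec J^{\operatorname{sat}(\ell_{j,t+1})}_{u(j,t+1),R}$. It adds length-$4$ cross chains that reverse the order of the two values on the right layer, so the late value at the left becomes early at the right. As a result, the window of $P_{j,t}$ depends on the pair $(\tau_t,\tau_{t+1})$, and every satisfied pattern fits on the clause machine. In the no case, the all-false pattern confines all three jobs to an interval of length less than $3$.
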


Before proceeding to the construction, we first define some notations used in this section. For a Boolean value $b\in\{\mathsf T,\mathsf F\}$, let $\bar b$ denote the other value.  For a literal $\ell$ over variable $x_i$, define its satisfying value by
\[
\operatorname{sat}(\ell)=
\begin{cases}
\mathsf T, & \ell=x_i,\\
\mathsf F, & \ell=\neg x_i.
\end{cases}
\]
If $I$ is a UMPS instance and $z$ is a job, then $I\prec z$ means that every job in $I$ precedes $z$, and $z\prec I$ is defined analogously.

We define the instances by induction.  Let $I_0$ be the one-job instance, which has optimum $1=A_0=B_0$. The 4 vs 5 instance $I_1(\Phi)$ is defined below. For every $k \ge 2$, the instance $I_k(\Phi)$ is constructed recursively using $I_0(\Phi),\ldots,I_{k-1}(\Phi)$, and will be specified later.

\paragraph{The 4-vs-5 Construction $I_1(\Phi)$.}
Let $\Phi$ be a $3\mathrm{SAT}$ instance with variables $x_1,\ldots,x_n$ and clauses $C_1,\ldots,C_m$.
For each variable $x_i$, create one machine $M_i$. Put on this machine two value jobs
\[
J_i^{\mathsf T},\quad J_i^{\mathsf F}.
\]
For each clause 
\[
C_j=(\ell_{j,1}\vee \ell_{j,2}\vee \ell_{j,3}),
\] create one machine $M_j$ with three jobs $P_{j,1},P_{j,2},P_{j,3}$.  If $u(j,t)$ is the variable index of $\ell_{j,t}$, add
\[
J_{u(j,t)}^{\operatorname{sat}(\ell_{j,t})}\prec P_{j,t}
\quad (t=1,2,3).
\]
This defines $I_1(\Phi)$. See \Cref{fig:pcp-free-base} for an illustration.

\begin{figure}[H]
\centering
\begin{tikzpicture}[x=1cm,y=1cm,scale=1.14,transform shape,>=Latex,font=\small]
  \pgfdeclarelayer{bg}
  \pgfsetlayers{bg,main}
  \def\slotW{1.34}
  \def\slotH{0.62}
  \def\xzero{4.85}
  \def\yA{4.15}
  \def\yB{2.95}
  \def\yC{1.75}
  \def\ySat{0.25}
  \def\yUnsat{-1.15}
  \tikzset{
    tjob/.style={draw=slotgreen!82!black,fill=slotgreen!18,line width=0.78pt,rounded corners=1.5pt,minimum width=1.12cm,minimum height=0.48cm,inner sep=1.2pt},
    fjob/.style={draw=slotred!82!black,fill=slotred!14,line width=0.78pt,rounded corners=1.5pt,minimum width=1.12cm,minimum height=0.48cm,inner sep=1.2pt},
    auxjob/.style={draw=gray!72!black,fill=gray!10,line width=0.70pt,rounded corners=1.5pt,minimum width=1.08cm,minimum height=0.48cm,inner sep=1.2pt},
    mbox/.style={line width=0.82pt},
    unsatbox/.style={line width=0.82pt,dash pattern=on 3.4pt off 2.4pt,line cap=round,line join=round},
    depT/.style={draw=slotgreen!76!black,line width=0.78pt,-{Latex[length=1.8mm,width=1.25mm]}},
    depF/.style={draw=slotred!78!black,line width=0.78pt,-{Latex[length=1.8mm,width=1.25mm]}}
  }

  \foreach \t in {0,1,2,3,4,5} {
    \pgfmathsetmacro{\xx}{\xzero+\t*\slotW}
    \draw[gray!38] (\xx,{-1.68})--(\xx,5.10);
    \node[anchor=north,font=\scriptsize] at (\xx,-1.74) {$\t$};
  }
  \draw[-{Latex[length=1.8mm,width=1.3mm]},line width=0.65pt]
    ({\xzero-1.15},5.10)--({\xzero+5.25*\slotW},5.10) node[right=1pt] {time};

  \node[anchor=east] at ({\xzero-0.35},{\yA+0.5*\slotH}) {$\sigma(x_1)=\mathsf T$};
  \node[anchor=east] at ({\xzero-0.35},{\yB+0.5*\slotH}) {$\sigma(x_2)=\mathsf T$};
  \node[anchor=east] at ({\xzero-0.35},{\yC+0.5*\slotH}) {$\sigma(x_3)=\mathsf F$};
  \node[anchor=east,align=right] at ({\xzero-0.35},{\ySat+0.5*\slotH})
    {a satisfied clause\\$(\neg x_1 \vee x_2 \vee x_3)$};
  \node[anchor=east,align=right] at ({\xzero-0.35},{\yUnsat+0.5*\slotH})
    {an unsatisfied clause\\$(\neg x_1 \vee \neg x_2 \vee x_3)$};

  \node[tjob]  (x1T) at ({\xzero+0.5*\slotW},{\yA+0.5*\slotH}) {$J_1^{\mathsf T}$};
  \node[fjob]  (x1F) at ({\xzero+1.5*\slotW},{\yA+0.5*\slotH}) {$J_1^{\mathsf F}$};
  \draw[mbox] (\xzero,\yA) rectangle ({\xzero+2*\slotW},{\yA+\slotH});

  \node[tjob]  (x2T) at ({\xzero+0.5*\slotW},{\yB+0.5*\slotH}) {$J_2^{\mathsf T}$};
  \node[fjob]  (x2F) at ({\xzero+1.5*\slotW},{\yB+0.5*\slotH}) {$J_2^{\mathsf F}$};
  \draw[mbox] (\xzero,\yB) rectangle ({\xzero+2*\slotW},{\yB+\slotH});

  \node[fjob]  (x3F) at ({\xzero+0.5*\slotW},{\yC+0.5*\slotH}) {$J_3^{\mathsf F}$};
  \node[tjob]  (x3T) at ({\xzero+1.5*\slotW},{\yC+0.5*\slotH}) {$J_3^{\mathsf T}$};
  \draw[mbox] (\xzero,\yC) rectangle ({\xzero+2*\slotW},{\yC+\slotH});

  \foreach \yy in {\yA,\yB,\yC} {
    \draw[line width=0.62pt] ({\xzero+\slotW},\yy)--({\xzero+\slotW},{\yy+\slotH});
  }

  \node[auxjob] (satP2) at ({\xzero+1.5*\slotW},{\ySat+0.5*\slotH}) {$P_2$};
  \node[auxjob] (satP1) at ({\xzero+2.5*\slotW},{\ySat+0.5*\slotH}) {$P_1$};
  \node[auxjob] (satP3) at ({\xzero+3.5*\slotW},{\ySat+0.5*\slotH}) {$P_3$};
  \draw[mbox] ({\xzero+1*\slotW},\ySat) rectangle ({\xzero+4*\slotW},{\ySat+\slotH});
  \foreach \t in {2,3} {
    \draw[line width=0.62pt] ({\xzero+\t*\slotW},\ySat)--({\xzero+\t*\slotW},{\ySat+\slotH});
  }

  \node[auxjob] (unsatP1) at ({\xzero+2.5*\slotW},{\yUnsat+0.5*\slotH}) {$P_1$};
  \node[auxjob] (unsatP2) at ({\xzero+3.5*\slotW},{\yUnsat+0.5*\slotH}) {$P_2$};
  \node[auxjob] (unsatP3) at ({\xzero+4.5*\slotW},{\yUnsat+0.5*\slotH}) {$P_3$};
  \draw[mbox] ({\xzero+2*\slotW},\yUnsat)
    rectangle ({\xzero+5*\slotW},{\yUnsat+\slotH});
  \foreach \t in {3,4} {
    \draw[line width=0.62pt]
      ({\xzero+\t*\slotW},\yUnsat)--({\xzero+\t*\slotW},{\yUnsat+\slotH});
  }

  \begin{pgfonlayer}{bg}
    \coordinate (colA1) at ($ (x1F.south west)!0.100!(x1F.south east) $);
    \coordinate (colB1) at ($ (x1F.south west)!0.300!(x1F.south east) $);
    \coordinate (colD2) at ($ (x2F.south west)!0.500!(x2F.south east) $);
    \coordinate (colE3) at ($ (x3T.south west)!0.700!(x3T.south east) $);
    \coordinate (colF3) at ($ (x3T.south west)!0.900!(x3T.south east) $);

    \coordinate (satP2In)   at (colA1 |- satP2.north);
    \coordinate (satP1In)   at ($ (satP1.north west)!0.50!(satP1.north east) $);
    \coordinate (satP3In)   at ($ (satP3.north west)!0.50!(satP3.north east) $);
    \coordinate (unsatP1In) at ($ (unsatP1.north west)!0.50!(unsatP1.north east) $);
    \coordinate (unsatP2In) at ($ (unsatP2.north west)!0.50!(unsatP2.north east) $);
    \coordinate (unsatP3In) at ($ (unsatP3.north west)!0.50!(unsatP3.north east) $);

    \coordinate (laneF) at (0,1.42);
    \coordinate (laneE) at (0,1.31);
    \coordinate (laneA) at (0,1.20);
    \coordinate (laneC) at (0,1.08);
    \coordinate (laneB) at (0,-0.06);
    \coordinate (laneD) at (0,-0.26);

    \draw[depT]
      (x2T.south)
      -- (x2T.south |- laneC)
      -- (satP2In |- laneC)
      -- (satP2In);

    \draw[depF]
      (colA1)
      -- (colA1 |- laneA)
      -- (satP1In |- laneA)
      -- (satP1In);

    \draw[depF]
      (colB1)
      -- (colB1 |- laneB)
      -- (unsatP1In |- laneB)
      -- (unsatP1In);

    \draw[depF]
      (colD2)
      -- (colD2 |- laneD)
      -- (unsatP2In |- laneD)
      -- (unsatP2In);

    \draw[depT]
      (colE3)
      -- (colE3 |- laneE)
      -- (satP3In |- laneE)
      -- (satP3In);

    \draw[depT]
      (colF3)
      -- (colF3 |- laneF)
      -- (unsatP3In |- laneF)
      -- (unsatP3In);
  \end{pgfonlayer}
\end{tikzpicture}
\caption{The $4$ vs.~$5$ construction, shown under the assignment $\sigma(x_1)=\mathsf T$, $\sigma(x_2)=\mathsf T$, and $\sigma(x_3)=\mathsf F$.  The upper clause $(\neg x_1 \vee x_2 \vee x_3)$ is satisfied, while the lower clause $(\neg x_1 \vee \neg x_2 \vee x_3)$ is unsatisfied.}
\label{fig:pcp-free-base}
\end{figure}

Fix an assignment $\sigma:\{x_1,\ldots,x_n\} \to \{\mathsf T, \mathsf F\}$ that satisfies $\Phi$. For every variable $x_i$, schedule $J_i^{\sigma(x_i)}$ at time $0$ and $J_i^{\overline{\sigma(x_i)}}$ at time $1$. In every clause, at least one literal is true.  That corresponding clause job is released at time $1$, while the other two clause jobs are released no later than time $2$. Hence the three clause jobs fit on $M_j$ in the slots $[1,2]$, $[2,3]$, and $[3,4]$.  Thus $\operatorname{OPT}(I_1(\Phi))\le A_1=4$.

Conversely, suppose that $\Phi$ is unsatisfiable and that there is a schedule of $I_1(\Phi)$ with makespan strictly less than $5$.  On each variable machine $M_i$, the two unit value jobs cannot both start before time $1$; therefore at least one of $J_i^{\mathsf T}$ and $J_i^{\mathsf F}$ starts at time at least $1$.  Call its value $b_i$, and set $\sigma(x_i):=\bar b_i$.  Since $\sigma$ does not satisfy $\Phi$, some clause $C_j$ is false under $\sigma$.  For every $t\in\{1,2,3\}$, the satisfying value of $\ell_{j,t}$ is exactly $b_{u(j,t)}$, so all three jobs $P_{j,1},P_{j,2},P_{j,3}$ are released at time at least $2$.  They would have to fit on one machine inside $[2,C_{\max})$, whose length is strictly less than $3$, which is  impossible.  Therefore $\operatorname{OPT}(I_1(\Phi))\ge B_1=5$.

\paragraph{Recursive Construction for $k \ge 2$.}

Fix $k \ge 2$ and assume $I_i(\Phi)$ has been defined for all $0 \le i < k$. In our construction of $I_k(\Phi)$, we will recursively use instances $I_i(\Phi), 0 \le i < k$, to anchor some dummy jobs in specific time slots.

Let $\Phi$ be a $3\mathrm{SAT}$ instance with variables $x_1,\dots,x_n$ and clauses $C_1,\ldots,C_m$.
For each variable $x_i$, create one variable machine $M_i$.  This machine has $k-1$ left layers and one terminal right layer. For each left layer $r\in [k-1]$, put on $M_i$ the three jobs
\[
J_{i,r}^{\mathsf T},\quad J_{i,r}^{\mathsf D},\quad J_{i,r}^{\mathsf F}.
\]
Also put on $M_i$ the three right-layer jobs
\[
J_{i,R}^{\mathsf T},\quad J_{i,R}^{\mathsf D},\quad J_{i,R}^{\mathsf F},
\]
where ${\mathsf T, \mathsf D, \mathsf F}$ represent True/Dummy/False, respectively. Thus, there are $3(k-1)+3=3k$ jobs in total on each variable machine. %

In the following, we add auxiliary jobs of kinds $H,Q,S,U,V$. Each job in each of those categories is fresh, and must be scheduled on its own fresh machine. These jobs enforce consistency between layers and anchor the dummy jobs. For each $1 \le r < k-1$, add two length-$3$ chains
\[
J_{i,r}^{\mathsf T}\prec H_{i,r,\mathsf T}^{1}\prec H_{i,r,\mathsf T}^{2}\prec J_{i,r+1}^{\mathsf T},
\]
\[
J_{i,r}^{\mathsf F}\prec H_{i,r,\mathsf F}^{1}\prec H_{i,r,\mathsf F}^{2}\prec J_{i,r+1}^{\mathsf F}.
\]
At the last layer, add two length-$4$ cross chains
\[
J_{i,k-1}^{\mathsf T}\prec Q_{i,\mathsf T}^{1}\prec Q_{i,\mathsf T}^{2}\prec Q_{i,\mathsf T}^{3}\prec J_{i,R}^{\mathsf F},
\]
\[
J_{i,k-1}^{\mathsf F}\prec Q_{i,\mathsf F}^{1}\prec Q_{i,\mathsf F}^{2}\prec Q_{i,\mathsf F}^{3}\prec J_{i,R}^{\mathsf T}.
\]

We next anchor the dummy jobs.  For each left layer $r\in [k-1]$, add
\[
I_{r-1}(\Phi)\prec J_{i,r}^{\mathsf D}\prec S_{i,r}\prec I_{k-r}(\Phi).
\]
For the right dummy job, add
\[
I_{k-1}(\Phi)\prec U_i\prec J_{i,R}^{\mathsf D}\prec V_i.
\]
Finally, for each clause $C_j$, create a machine $M_j$ containing three jobs $P_{j,1},P_{j,2},P_{j,3}$.  Let $u(j,t)$ be the variable index of literal $\ell_{j,t}$, and use cyclic notation $u(j,4)=u(j,1)$, $\ell_{j,4}=\ell_{j,1}$.  For each $t\in\{1,2,3\}$, add
\[
J_{u(j,t),k-1}^{\operatorname{sat}(\ell_{j,t})}
\prec P_{j,t}
\prec J_{u(j,t+1),R}^{\operatorname{sat}(\ell_{j,t+1})}.
\]
This completes the construction of $I_k(\Phi)$. For every $0 \le i < k$, the instance $I_i(\Phi)$ used to anchor the dummy jobs can be shared, so we only need one instance per starting time. Therefore, the reduction runs in time polynomial in $k$. See \Cref{fig:pcp-free-recursive-construction} for an illustration of the construction for $k=3$.

\begin{figure}[H]
\centering
\resizebox{0.9\textwidth}{!}{%
\begin{tikzpicture}[x=0.94cm,y=1cm,>=Latex,font=\small]
  \tikzset{
    truebox/.style={draw=slotgreen!82!black,fill=slotgreen!17,line width=0.84pt,rounded corners=1.4pt},
    dummybox/.style={draw=slotyellow!85!black,fill=slotyellow!22,line width=0.84pt,rounded corners=1.4pt},
    falsebox/.style={draw=slotred!82!black,fill=slotred!14,line width=0.84pt,rounded corners=1.4pt},
    machine/.style={draw=gray!62!black,line width=0.72pt,rounded corners=2pt},
    auxbox/.style={draw=gray!72!black,fill=gray!9,line width=0.70pt,rounded corners=1.4pt},
    recbox/.style={draw=slotblue!80!black,fill=slotblue!6,dashed,line width=0.76pt,rounded corners=1.4pt},
    chainarc/.style={draw=gray!67!black,line width=0.80pt,-{Latex[length=1.8mm,width=1.25mm]}},
    anchorlink/.style={draw=gray!70!black,line width=0.72pt,-{Latex[length=1.7mm,width=1.15mm]}},
    chainlabel/.style={inner sep=0.8pt,font=\scriptsize,text=gray!62!black},
    joblabel/.style={font=\scriptsize}
  }
  \def\jobY{0.22}
  \def\jobH{0.56}
  \def\jobM{0.50}
  \def\ancH{0.56}
  \def\ancM{0.28}
  \def\rOne{-1.00}
  \def\rTwo{-1.78}
  \def\rRight{-2.56}

  \foreach \t in {0,1,...,10} {
    \draw[gray!25] (\t,-3.08)--(\t,1.94);
    \node[anchor=north,font=\scriptsize,text=gray!70] at (\t,-3.14) {$\t$};
  }
  \draw[-{Latex[length=1.8mm,width=1.3mm]},line width=0.62pt]
    (-0.62,-3.08)--(10.35,-3.08) node[right=1pt,font=\scriptsize] {time};
  \draw[machine] (0,\jobY-0.09) rectangle (10,\jobY+\jobH+0.09);
  \node[anchor=east] at (-0.28,\jobY+\jobM*\jobH) {$M_i$};

  \draw[truebox] (0.06,\jobY) rectangle (0.94,\jobY+\jobH);
  \node[joblabel] at (0.5,\jobY+\jobM*\jobH) {$J_{i,1}^{\mathsf T}$};
  \draw[dummybox] (1.06,\jobY) rectangle (1.94,\jobY+\jobH);
  \node[joblabel] at (1.5,\jobY+\jobM*\jobH) {$J_{i,1}^{\mathsf D}$};
  \draw[falsebox] (2.06,\jobY) rectangle (2.94,\jobY+\jobH);
  \node[joblabel] at (2.5,\jobY+\jobM*\jobH) {$J_{i,1}^{\mathsf F}$};

  \draw[truebox] (3.06,\jobY) rectangle (3.94,\jobY+\jobH);
  \node[joblabel] at (3.5,\jobY+\jobM*\jobH) {$J_{i,2}^{\mathsf T}$};
  \draw[dummybox] (4.06,\jobY) rectangle (4.94,\jobY+\jobH);
  \node[joblabel] at (4.5,\jobY+\jobM*\jobH) {$J_{i,2}^{\mathsf D}$};
  \draw[falsebox] (5.06,\jobY) rectangle (5.94,\jobY+\jobH);
  \node[joblabel] at (5.5,\jobY+\jobM*\jobH) {$J_{i,2}^{\mathsf F}$};

  \draw[falsebox] (7.06,\jobY) rectangle (7.94,\jobY+\jobH);
  \node[joblabel] at (7.5,\jobY+\jobM*\jobH) {$J_{i,R}^{\mathsf F}$};
  \draw[dummybox] (8.06,\jobY) rectangle (8.94,\jobY+\jobH);
  \node[joblabel] at (8.5,\jobY+\jobM*\jobH) {$J_{i,R}^{\mathsf D}$};
  \draw[truebox] (9.06,\jobY) rectangle (9.94,\jobY+\jobH);
  \node[joblabel] at (9.5,\jobY+\jobM*\jobH) {$J_{i,R}^{\mathsf T}$};

  \def\tLane{1.42}
  \def\fLane{1.10}
  \draw[chainarc] (0.50,\jobY+\jobH) -- (0.50,\tLane) -- (3.36,\tLane) -- (3.36,\jobY+\jobH);
  \draw[chainarc] (3.64,\jobY+\jobH) -- (3.64,\tLane) -- (7.50,\tLane) -- (7.50,\jobY+\jobH);
  \draw[chainarc] (2.50,\jobY+\jobH) -- (2.50,\fLane) -- (5.36,\fLane) -- (5.36,\jobY+\jobH);
  \draw[chainarc] (5.64,\jobY+\jobH) -- (5.64,\fLane) -- (9.50,\fLane) -- (9.50,\jobY+\jobH);
  \node[chainlabel] at (1.50,\tLane+0.14) {$+3$};
  \node[chainlabel] at (5.72,\tLane+0.14) {$+4$};
  \node[chainlabel] at (4.02,\fLane+0.14) {$+3$};
  \node[chainlabel] at (7.86,\fLane+0.14) {$+4$};

  \draw[recbox] (0.06,\rOne) rectangle (0.94,\rOne+\ancH);
  \node at (0.5,\rOne+\ancM) {$I_0$};
  \draw[auxbox] (2.06,\rOne) rectangle (2.94,\rOne+\ancH);
  \node at (2.5,\rOne+\ancM) {$S_{i,1}$};
  \draw[recbox] (3.06,\rOne) rectangle (9.94,\rOne+\ancH);
  \node at (6.5,\rOne+\ancM) {$I_2$};
  \draw[anchorlink] (0.94,\rOne+\ancM)--(1.34,\rOne+\ancM)--(1.34,\jobY);
  \draw[anchorlink] (1.66,\jobY)--(1.66,\rOne+\ancM)--(2.06,\rOne+\ancM);

  \draw[recbox] (0.06,\rTwo) rectangle (3.94,\rTwo+\ancH);
  \node at (2,\rTwo+\ancM) {$I_1$};
  \draw[auxbox] (5.06,\rTwo) rectangle (5.94,\rTwo+\ancH);
  \node at (5.5,\rTwo+\ancM) {$S_{i,2}$};
  \draw[recbox] (6.06,\rTwo) rectangle (9.94,\rTwo+\ancH);
  \node at (8,\rTwo+\ancM) {$I_1$};
  \draw[anchorlink] (3.94,\rTwo+\ancM)--(4.34,\rTwo+\ancM)--(4.34,\jobY);
  \draw[anchorlink] (4.66,\jobY)--(4.66,\rTwo+\ancM)--(5.06,\rTwo+\ancM);

  \draw[recbox] (0.06,\rRight) rectangle (6.94,\rRight+\ancH);
  \node at (3.5,\rRight+\ancM) {$I_2$};
  \draw[auxbox] (7.06,\rRight) rectangle (7.94,\rRight+\ancH);
  \node at (7.5,\rRight+\ancM) {$U_i$};
  \draw[auxbox] (9.06,\rRight) rectangle (9.94,\rRight+\ancH);
  \node at (9.5,\rRight+\ancM) {$V_i$};
  \draw[anchorlink] (7.94,\rRight+\ancM)--(8.34,\rRight+\ancM)--(8.34,\jobY);
  \draw[anchorlink] (8.66,\jobY)--(8.66,\rRight+\ancM)--(9.06,\rRight+\ancM);
\end{tikzpicture}%
}
\caption{Illustration of the construction for $k=3$. The top row shows the single variable machine $M_i$ with $\sigma(x_i)=\mathsf T$ in the completeness schedule, and the arrows labeled $+3$ and $+4$ abbreviate the auxiliary chains. The lower rows show instances with lengths $A_0=1$, $A_1=4$, and $A_2=7$ to anchor the dummy jobs.}
\label{fig:pcp-free-recursive-construction}
\end{figure}

\paragraph{Completeness.}
Assume that $\Phi$ is satisfiable, and fix a satisfying assignment $\sigma:\{x_1,\ldots,x_n\}\to\{\mathsf T,\mathsf F\}$.  By the induction hypothesis, every recursive copy $I_i(\Phi)$ with $0 \le i < k$ has a schedule of length $A_i=3i+1$.

For each variable $x_i$, and each left layer $r\in [k-1]$, schedule on $M_i$
\[
J_{i,r}^{\sigma(x_i)}\text{ at time }3(r-1),
\quad
J_{i,r}^{\mathsf D}\text{ at time }3(r-1)+1,
\quad
J_{i,r}^{\overline{\sigma(x_i)}}\text{ at time }3(r-1)+2.
\]
On the same machine, schedule the right-layer jobs as
\[
J_{i,R}^{\overline{\sigma(x_i)}}\text{ at time }3k-2,
\quad
J_{i,R}^{\mathsf D}\text{ at time }3k-1,
\quad
J_{i,R}^{\sigma(x_i)}\text{ at time }3k.
\]
These slots are disjoint; the only idle slot on $M_i$ is $[3k-3,3k-2)$.
The $H$- and $Q$-chains fit exactly between their endpoints.  The left anchors fit because
\[
A_{r-1}+1+1+A_{k-r}=3k+1=A_k,
\]
and the right anchors fit because
\[
A_{k-1}+1+1+1=3k+1=A_k.
\]

It remains to schedule the clause machines.  For a clause $C_j$, let $\tau_t\in\{0,1\}$ indicate whether literal $\ell_{j,t}$ is true under $\sigma$, and recall that $\tau_4=\tau_1$.  Put $L:=3k-5$.  Based on our construction, the unit interval for $P_{j,t}$ can be placed inside the following time window, according to the pair $(\tau_t,\tau_{t+1})$:
\[
\begin{array}{c|c}
(\tau_t,\tau_{t+1}) & \text{time window containing } P_{j,t} \\\hline
(1,1) & [L,\,L+5),\\
(1,0) & [L,\,L+3),\\
(0,1) & [L+2,\,L+5),\\
(0,0) & [L+2,\,L+3).
\end{array}
\]
Since the clause is satisfied, its cyclic truth pattern is not $000$.  If it is $111$, all three jobs can be placed consecutively starting at time $L$.  If it has exactly two true literals, schedule the job with $(\tau_t,\tau_{t+1})=(1,0)$ at time $L$, the job with $(\tau_t,\tau_{t+1})=(1,1)$ at time $L+1$, and the job with $(\tau_t,\tau_{t+1})=(0,1)$ at time $L+2$.  If it has exactly one true literal, schedule the job with $(\tau_t,\tau_{t+1})=(1,0)$ at time $L$, the job with $(\tau_t,\tau_{t+1})=(0,0)$ at time $L+2$, and the job with $(\tau_t,\tau_{t+1})=(0,1)$ at time $L+3$.  Thus every clause machine is schedulable within the makespan $A_k$. %

\paragraph{Soundness.}
Now assume that $\Phi$ is unsatisfiable, and suppose for contradiction that $I_k(\Phi)$ has a schedule of makespan $C_{\max}<B_k=4k+1$.  By the induction hypothesis, every recursive copy $I_i(\Phi)$ with $0 \le i < k$ has makespan at least $B_i=4i+1$.

Let $s(z)$ denote the start time of a job $z$.  Consider a left dummy job $J_{i,r}^{\mathsf D}$.  Since it is preceded by a copy of $I_{r-1}(\Phi)$, we have
\[
s(J_{i,r}^{\mathsf D})\ge B_{r-1}=4r-3.
\]
Since it is followed by one extra job and then a copy of $I_{k-r}(\Phi)$, we also have
\[
s(J_{i,r}^{\mathsf D})+2+B_{k-r}<B_k,
\]
and therefore $s(J_{i,r}^{\mathsf D})<4r-2$.  Thus $J_{i,r}^{\mathsf D}$ occupies a unit interval contained in $[4r-3,4r-1)$.

For $r=1$, one of $J_{i,1}^\mathsf T$ and $J_{i,1}^\mathsf F$ must start at or after time $2$ on $M_i$: since $J_{i,1}^{\mathsf D}$ starts in $[1,2)$, at most one of these two unit jobs can start before time $2$ without overlapping it or the other value job.  Denote such a value by $b_i\in\{\mathsf T,\mathsf F\}$.  We claim that for every $r\in [k-1]$,
\begin{equation}
s(J_{i,r}^{b_i})\ge 4r-2.
\label{eq:pcp-free-late-left-3sat}
\end{equation}
This is true for $r=1$ by definition of $b_i$.  If it holds for some $r < k-1$, then the length-$3$ chain from $J_{i,r}^{b_i}$ to $J_{i,r+1}^{b_i}$ gives
\[
s(J_{i,r+1}^{b_i})\ge s(J_{i,r}^{b_i})+3 \ge 4r+1.
\]
The dummy job $J_{i,r+1}^{\mathsf D}$ starts in $[4r+1,4r+2)$, so a unit job on $M_i$ starting at or after $4r+1$ cannot be placed before it.  Hence $J_{i,r+1}^{b_i}$ must start after $J_{i,r+1}^{\mathsf D}$ completes, and therefore
\[
s(J_{i,r+1}^{b_i})\ge 4r+2 = 4(r+1)-2,
\]
proving \eqref{eq:pcp-free-late-left-3sat}.

Next consider the right dummy job.  Because of the anchor
\[
I_{k-1}(\Phi)\prec U_i\prec J_{i,R}^{\mathsf D}\prec V_i,
\]
we have
\[
s(J_{i,R}^{\mathsf D})\ge B_{k-1}+1=4k-2.
\]
Also, since $V_i$ must be scheduled after $J_{i,R}^{\mathsf D}$ and $C_{\max}<4k+1$, we have $s(J_{i,R}^{\mathsf D})<4k-1$.

Applying \eqref{eq:pcp-free-late-left-3sat} at the last left layer and then using the length-$4$ cross chain, we have
\[
s(J_{i,R}^{\bar b_i})\ge (4k-6)+4 = 4k-2.
\]
We now claim that
\begin{equation}
s(J_{i,R}^{b_i})<4k-2.
\label{eq:pcp-free-early-right-3sat}
\end{equation}
Indeed, if $s(J_{i,R}^{b_i})\ge 4k-2$, then the three jobs $J_{i,R}^{b_i}$, $J_{i,R}^{\bar b_i}$, and $J_{i,R}^{\mathsf D}$ would all have to be processed on $M_i$ during the interval $[4k-2,C_{\max})$.  This interval has length strictly less than $3$, so it cannot contain three unit-length jobs.

Define an assignment $\sigma$ by $\sigma(x_i):=\bar b_i$ for each variable $x_i$.  Since $\Phi$ is unsatisfiable, some clause $C_j$ is false under $\sigma$.  For each literal $\ell_{j,t}$ in that clause, the satisfying value of $\ell_{j,t}$ is exactly $b_{u(j,t)}$.  Hence, by \eqref{eq:pcp-free-late-left-3sat}, the predecessor of $P_{j,t}$ on the last left layer finishes no earlier than $4k-5$.  By \eqref{eq:pcp-free-early-right-3sat}, its successor on the right starts before $4k-2$.  Thus the three jobs $P_{j,1},P_{j,2},P_{j,3}$ must all be scheduled on one machine inside an interval of length strictly less than $3$, which is impossible.  Therefore $\operatorname{OPT}(I_k(\Phi))\ge B_k$.

\begin{corollary}
\label{cor:pcp-free-four-thirds}
For every constant $\varepsilon>0$, unit-length UMPS is NP-hard to approximate within a factor $4/3-\varepsilon$.
\end{corollary}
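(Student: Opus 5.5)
The plan is to derive the corollary directly from \Cref{thm:pcp-free-gap-family}, by choosing $k$ large enough in terms of $\varepsilon$ and then running the usual gap argument.

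\textbf{Choosing $k$.} Fix $\varepsilon>0$. Since
\[
\frac{B_k}{A_k}=\frac{4k+1}{3k+1}\longrightarrow \frac43 \quad (k\to\infty),
\]
and this ratio increases with $k$, we can fix a constant $k=k(\varepsilon)$ with
\[
\frac{4k+1}{3k+1}>\frac43-\varepsilon .
\]
A sufficient choice is any $k$ with $\frac{1}{3(3k+1)}<\varepsilon$, since $\frac43-\frac{4k+1}{3k+1}=\frac{1}{3(3k+1)}$.

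\textbf{The gap argument.} Suppose there were a polynomial-time algorithm achieving approximation ratio $\rho\le 4/3-\varepsilon$ for unit-length UMPS. Given a 3SAT instance $\Phi$, we build $I_k(\Phi)$ in polynomial time using \Cref{thm:pcp-free-gap-family}; this is polynomial because $k$ is a fixed constant. We then run the algorithm on $I_k(\Phi)$ and obtain a value $\mathrm{ALG}$, which we compare with the threshold $\rho A_k$.

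\textbf{Why this decides 3SAT.} If $\Phi$ is satisfiable, then $\mathrm{ALG}\le \rho\,\operatorname{OPT}\le \rho A_k$. If $\Phi$ is unsatisfiable, then
\[
\mathrm{ALG}\ge \operatorname{OPT}\ge B_k>(4/3-\varepsilon)A_k\ge \rho A_k .
\]
So the comparison decides 3SAT in polynomial time, which shows that $\rho$-approximation is NP-hard.

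The only step needing care is that the construction size remains polynomial for constant $k$. The recursion uses one shared copy of $I_i(\Phi)$ for each $i<k$, so the total size is at most $k$ times the size of the previous level, i.e.\ polynomial in $|\Phi|$. This is the fact the paper records when it says the reduction runs in time polynomial in $k$.
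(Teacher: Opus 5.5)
Your proof is correct and is the same as the paper's. You choose $k$ so that $(4k+1)/(3k+1)$ exceeds the target ratio, then use the $A_k$ vs.\ $B_k$ gap of \Cref{thm:pcp-free-gap-family}, where $\rho A_k < B_k$ separates the two cases. Your size discussion is not needed, because the theorem already gives a polynomial-time reduction for each fixed $k$. For the record, the construction shares one recursive copy per starting time, which means up to two copies of each $I_i$, not one; sharing a single copy of $I_i$ between its predecessor and successor roles would create a cycle or break completeness, and the size stays polynomial for constant $k$ either way.
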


\begin{proof}
Let $\rho<4/3$ be fixed, and choose $k$ such that $\rho<(4k+1)/(3k+1)$.  By \Cref{thm:pcp-free-gap-family}, it is NP-hard to distinguish $\operatorname{OPT}\le 3k+1$ from $\operatorname{OPT}\ge 4k+1$.  A $\rho$-approximation would distinguish the two cases because $\rho(3k+1)<4k+1$.
\end{proof}